\newtheorem{theorem}{Theorem}[section]
\newtheorem{lemma}[theorem]{Lemma}
\newtheorem{corollary}[theorem]{Corollary}
\newtheorem{definition}[lemma]{Definition}
\newtheorem{proposition}[lemma]{Proposition}
\newtheorem{example}[lemma]{Example}
\newtheorem{def-restatable}[theorem]{Definition}
\newtheorem{obs-restatable}{Lemma}
\newtheorem{lemma-restatable}[theorem]{Lemma}
\newtheorem{corollary-restatable}[theorem]{Corollary}
\newtheorem*{conjecture}{Conjecture}
\newcommand{\eps}{\varepsilon}
\newcommand{\Oh}{\mathcal{O}}
\newcommand{\D}{\mathcal{D}\,}
\renewcommand{\S}{\mathcal{S}\,}
\newcommand{\I}{\mathcal{I}\,}
\def\polylog{\operatorname{polylog}}
\newcommand{\DontCares}{\textsc{Pattern Matching with Don't Cares}\xspace}
\newcommand{\GPM}{\textsc{GPM}\xspace}
\newcommand{\GPMFull}{\textsc{Generalised Pattern Matching}\xspace}
\newcommand{\countchars}{\mathsf{count}}
\newcommand{\range}{\mathsf{range}}
\newcommand{\epsdep}{\eps^{-2}}
\newcommand{\UBddetnoeps}{\D n \log^6 n}
\newcommand{\timeUBddet}{\Oh(\epsdep \UBddetnoeps)}
\newcommand{\UBsdetnoeps}{\sqrt{\S} n \log^{7/2} n}
\newcommand{\timeUBsdet}{\Oh(\eps^{-1} \UBsdetnoeps)}
   \newcommand{\defproblem}[3]{
  \vspace{2mm}
\noindent\fbox{
  \begin{minipage}{0.96\textwidth}
  #1\\
  {\bf{Input:}} #2  \\
  {\bf{Output:}} #3
  \end{minipage}
  }
  \vspace{2mm}
}
\title{Generalised Pattern Matching Revisited}
\date{}
\author[1]{Bartłomiej Dudek}
\author[1]{Paweł Gawrychowski}
\author[2]{Tatiana Starikovskaya}
\affil[1]{Institute of Computer Science, University of Wrocław, Poland}
\affil[2]{DIENS, \'{E}cole normale sup\'{e}rieure, PSL Research University, France}
\begin{document}
\maketitle

\begin{abstract}  
In the problem of $\textsc{Generalised Pattern Matching}\ (\textsc{GPM})$ [STOC'94, Muthukrishnan and Palem], we are given a text $T$ of length $n$ over an alphabet $\Sigma_T$, a pattern $P$ of length $m$ over an alphabet $\Sigma_P$, and a matching relationship $\subseteq \Sigma_T \times \Sigma_P$, and must return all substrings of $T$ that match $P$ (\emph{reporting}) or the number of mismatches between each substring of $T$ of length $m$ and $P$ (\emph{counting}). In this work, we improve over all previously known algorithms for this problem:

\begin{itemize}
\item For $\mathcal{D}\,$ being the maximum number of characters that match a fixed character, we show two new Monte Carlo algorithms, a reporting algorithm with time $\mathcal{O}(\mathcal{D}\, n \log n \log m)$ and a $(1-\varepsilon)$-approximation counting algorithm with time $\mathcal{O}(\varepsilon^{-1} \mathcal{D}\, n \log n \log m)$. We then derive a $(1-\varepsilon)$-approximation deterministic counting algorithm for $\textsc{GPM}$ with $\mathcal{O}(\varepsilon^{-2} \mathcal{D}\, n \log^6 n)$ time. 

\item For $\mathcal{S}\,$ being the number of pairs of matching characters, we demonstrate Monte Carlo algorithms for reporting and $(1-\varepsilon)$-approximate counting with running time $\mathcal{O}(\sqrt\mathcal{S}\, n \log m \sqrt{\log n})$ and $\mathcal{O}(\sqrt{\varepsilon^{-1} \mathcal{S}\,} n \log m \sqrt{\log n})$, respectively, as well as a ${(1-\varepsilon)}$-approximation deterministic algorithm for the counting variant of $\textsc{GPM}$ running in $\mathcal{O}(\varepsilon^{-1} \sqrt{\mathcal{S}} n \log^{7/2} n)$ time.

\item Finally, for $\mathcal{I}\,$ being the total number of disjoint intervals of characters that match the $m$ characters of the pattern $P$, we show that both the reporting and the counting variants of $\textsc{GPM}$ can be solved exactly and deterministically in $\mathcal{O}(n\sqrt{\mathcal{I}\, \log m} +n \log n)$ time. 
\end{itemize} 

At the heart of our new deterministic upper bounds for $\mathcal{D}\,$ and $\mathcal{S}\,$ lies a faster construction of superimposed codes, which solves an open problem posed in [FOCS'97, Indyk] and can be of independent interest. 

To conclude, we demonstrate first lower bounds for $\textsc{GPM}$. We start by showing that any deterministic or Monte Carlo algorithm for $\textsc{GPM}$ must use $\Omega(\mathcal{S})$ time, and then proceed to show higher lower bounds for combinatorial algorithms. These bounds show that our algorithms are almost optimal, unless a radically new approach is developed. 
\end{abstract}

\newpage
 \pagenumbering{arabic}

\section{Introduction}
\label{sec:introduction}
Processing noisy data is a keystone of modern string processing. One possible approach to address this challenge is approximate pattern matching, where the task is to find all substrings of the text that are close to the pattern under some similarity measure, such as Hamming or edit distance. The approximate pattern matching approach assumes that noise is arbitrary, i.e. that we can delete or replace any character of the pattern or of the text by any other character of the alphabet. 

The assumption that the noise is completely arbitrary is not necessarily justified, as in practice we might have some predetermined
knowledge about the structure of the errors. In this paper we focus on the \textsc{Generalised Pattern Matching} (\GPM) problem that addresses this setting. We assume to be given a text $T$ over an alphabet $\Sigma_T$, a pattern $P$ over an alphabet $\Sigma_P$, and we allow each character of $\Sigma_T$ to match a subset of characters of $\Sigma_P$. We must report all substrings of the text that match the pattern. 
This problem was introduced in STOC'94~\cite{MuthukrishnanP94} by Muthukrishnan and Palem to
provide a unified approach for solving different extensions of the classical pattern matching question that has been considered as separate problems in the early 90s. Later, Muthukrishnan~\cite{Muthukrishnan95} considered a \emph{counting variant} of \GPM, where the task is to count the number of mismatches between substrings of the text and the pattern. Formally, the problem is defined as follows:

\vspace{2mm}
\noindent\fbox{
\begin{minipage}{0.96\textwidth}
  \textsc{Generalised Pattern Matching} (\GPM)\\
  {\bf{Input:}} A text $T \in (\Sigma_T)^n$, a pattern $P \in (\Sigma_P)^m$, and a matching relationship $\subseteq \Sigma_T \times \Sigma_P$. \\
  {\bf{Output (Reporting):}} All $i \in [n-m+1]$ such that $T[i,i+m-1]$ matches $P$.\\
   {\bf{Output (Counting):}} For each $i \in [n-m+1]$, the number of positions $j \in [m]$ such that $T[i+j-1]$ does not match $P[j]$.
  \end{minipage}
  }
  \vspace{2mm}

Muthukrishnan and Palem~\cite{MuthukrishnanP94} and subsequent work~\cite{Muthukrishnan95,MuthukrishnanR95} considered three natural parameters describing the matching relationship ($\D, \S$) or the pattern ($\I$). 
Viewing the matching relationship as a bipartite graph with edges connecting pairs of matching characters from $\Sigma_T \times \Sigma_P$, $\D$ is the maximum degree of a node and $\S$ is the total number of edges in the graph.
Next, the parameter $\I$ describes the pattern rather than the matching relationship. 
For each character $a \in \Sigma_P$, let $I(a)$ be the minimal set  of disjoint sorted intervals that contain the characters that match~$a$, and define $\I = \sum_{j \in [m]} |I(P[j])|$.

\vspace{-0.4cm}
\paragraph{The maximum number of characters that match a fixed character, $\D$.} For the reporting variant of \GPM, Muthukrishnan~\cite{Muthukrishnan95} showed a Las Vegas algorithm with running time $\Oh(\D n  \log n \log m)$. Indyk~\cite{I:1997} used superimposed codes to show a deterministic algorithm with running time $\Oh(|\Sigma_P| \D^2 \log^2 n  + \D n \log^3 n \log m)$. For the counting variant, Muthukrishnan~\cite{Muthukrishnan95} showed a $(\log m)$-approximation Las Vegas algorithm with time $\Oh(\D n \log n \log m)$. 
Indyk~\cite{I:1997}  gave a $(1-\eps)$-approximation deterministic and Monte Carlo algorithm with running time
$\Oh (\eps^{-2}\D^2 n \log^3 n)$ and $\Oh (\eps^{-2} \D n \log^3 n)$, respectively.

\vspace{-0.4cm}
\paragraph{The number of matching pairs of characters, $\S$.} Muthukrishnan and Ramesh~\cite{MuthukrishnanR95} gave an $\Oh((\S m \log^2 m)^{1/3} n)$-time algorithm for the reporting variant of \GPM. 

\vspace{-0.4cm}
\paragraph{The number of intervals of matching characters, $\I$.} For this parameter, Muthukrishnan~\cite{Muthukrishnan95} gave an $\Oh(\I + (m \I)^{1/3} n \sqrt{\log m})$-time algorithm\footnote{\cite[Theorem 9]{Muthukrishnan95} claims $\Oh(n+\I+\I^{1/3} (nm)^{2/3} \sqrt{\log m})$, but the first sentence of the proof states that for $n \le 2m$ the algorithm takes $\Oh(\I + \I^{1/3} m^{4/3} \sqrt{\log m})$ time, where the first term is the time that we need to read the input. For a longer text, one needs to apply it $n/m$ times for overlapping blocks of length $2m$, making the total time $\Oh(\I + n/m \cdot \I^{1/3} m^{4/3} \sqrt{\log m}) = \Oh(\I + (m \I)^{1/3} n \sqrt{\log m})$.}.

\subsection{Our Contribution}
We improve existing randomised and deterministic upper bounds for \GPM, and demonstrate matching lower bounds. At heart of our deterministic algorithms for the counting variant of \GPM is a solution to an open problem of Indyk~\cite{I:1997} on construction of superimposed codes.

\paragraph{Data-dependent superimposed codes.}
A $z$-superimposed code is a set of binary vectors such that no vector is contained in a Boolean sum (i.e. bitwise OR) of $z$ other vectors. Superimposed codes find their main application in information retrieval (e.g. in compressed representation of document attributes), and optimizing broadcasting on radio networks~\cite{Kautz06}, and have also proved to be useful in graph algorithms~\cite{DBLP:conf/icalp/GeorgiadisGIPU17,DBLP:journals/corr/abs-1807-05803}. Indyk~\cite{I:1997} extended the notion of superimposed codes to the so-called \emph{data-dependent superimposed codes}, and asked for a deterministic construction for such codes with a certain additional property that makes them useful for counting mismatches (see Section~\ref{sec:superimposed_codes} for a formal definition). We provide such a construction algorithm in Theorem~\ref{th:superimposed_codes}. 
We briefly describe the high-level idea below.

We need the concept of discrepancy minimization. Given a universe $U$, each of its elements is assigned one of two colours, red or blue. The \emph{discrepancy} of a subset of $U$ is defined as the difference between the number of red and blue elements in it, and the discrepancy of a family $\mathcal{F}$ of subsets is defined as the maximum of the absolute values of discrepancies of the subsets in $\mathcal{F}$. Discrepancy minimization is a fundamental notion with numerous applications, including derandomization, computational geometry, numerical integration, understanding the limits of models of computation, and so on (see e.g.~\cite{Chazelle01}).
A recent line of work showed a series of algorithms for constructing colourings of low discrepancy in various settings~\cite{Lovett15,Bansal10,BansalS13,BansalCKL14,BansalG17,BansalDGL18,BansalDG19,Larsen19}. 
For our applications, we need to work under the assumption that the size of each subset in $\mathcal{F}$ is bounded by a given parameter $k$.
In Theorem~\ref{th:discrepancy}, we describe a fast deterministic algorithm that returns a colouring of small discrepancy for this case.
We follow the algorithm described by Chazelle~\cite{Chazelle01} that can be roughly summarized as based on the method of conditional expectations tweaked
as to allow for an efficient implementation. In more detail, Chazelle's construction assumes infinite precision of computation and does not
immediately translate into an efficient algorithm working in the Word RAM model of computation, thus requiring resolving some technical
issues to bound the required precision and the overall complexity.

We apply discrepancy minimization to design in Lemma~\ref{lm:main_hash_function} a procedure that, given a family $\mathcal{F}$ of subsets of $U$,
partitions the universe $U$ into not too many parts such that the intersection of each part and each of the subsets in $\mathcal{F}$ is small.
The procedure follows the natural idea of colouring the universe with two colours, and then recursing on the elements with the
same colour. Every step of such construction introduces some penalty that needs to be carefully controlled as to guarantee the desired property
in the end. Because of this penalty, we are only able to guarantee that the intersections are small, but not constant.
To finish the construction, we combine the partition with a hash function into the ring of polynomials.
We stress that this part of the construction is new and not simply a modification of Chazelle's (or Indyk's) method.

\paragraph{Upper bounds for \GPM.} Similar to previous work, we assume that the alphabets' sizes are polynomial in $n$ and that the matching relationship is given as a graph $M$ on the set of vertices $\Sigma_T\cup\Sigma_P$. We also assume to have access to three oracles that can answer the following questions in $\Oh(1)$ time: 

\begin{enumerate}
\item Is there an edge between $a\in\Sigma_{T}$ and $b\in\Sigma_{P}$ (in other words, do $a$ and $b$ match)?
\item What is the degree of a character $a\in\Sigma_{T}$ or $b\in\Sigma_{P}$ (in other words, what is the number of characters that match a given character)?
\item What is the $k$-th neighbor of $a\in\Sigma_T$ (in other words, what is the $k$-th character $b\in\Sigma_P$ matching $a$)? We assume
an arbitrary (but fixed) order of neighbors of every node.
\end{enumerate}

\noindent
Under these assumptions, we show the following upper bounds summarized in Tables~\ref{tb:GPM-r} and~\ref{tb:GPM-c}:

\begin{enumerate}
\item We start by showing a new Monte Carlo algorithm for the parameter $\D$ with running time $\Oh(\D n \log m \log n)$ (Theorem~\ref{th:UB_d_rand}). While its running time is the same as that of~\cite{Muthukrishnan95}, it encapsulates a novel approach to the problem that serves as a basis for other algorithms. We then derive a Monte Carlo algorithm for the parameter $\S$ with running time $\Oh(\sqrt\S n \log m \sqrt{\log n})$ (Theorem~\ref{th:UB_s_rand}). As a corollary, we show a $(1-\eps)$-approximation Monte Carlo algorithm that solves the counting variant of \GPM in time $\Oh(\min\{\eps^{-1} \D \log n, \sqrt{\eps^{-1} \S \log n}\} \cdot n \log m)$ (Corollary~\ref{cor:UB_s_d_rand_counting}). All three algorithms have inverse-polynomial error probability. 

\item Next, using the data-dependent superimposed codes, we construct $(1-\eps)$-approximation deterministic algorithms for the counting variant of \GPM. The first algorithm requires $\timeUBddet$ time (Theorem~\ref{th:UB_d_det}), and the second algorithm $\timeUBsdet$ time (Theorem~\ref{th:UB_s_det}). By taking $\eps = 1/2$, we immediately obtain deterministic algorithms for the reporting variant of the problem with the same complexities. 

\item Finally, we show that both the reporting and the counting variants of \GPM can be solved exactly and deterministically in $\Oh(n\sqrt{\I \log m} +n \log n)$ time (Theorem~\ref{th:det_i}). 
\end{enumerate}

\begin{table}[t]
\setlength\cellspacetoplimit{5mm}
\setlength\cellspacebottomlimit{5mm}
\begin{center}
\begin{tabular}{|l|l|l|} 
\hline
\textbf{Time} & \textbf{Det./Rand.}  &\\
\hline 
\hline
$\Oh (|\Sigma_P| \D^2 \log^2 n  + \D n \log^3 n \log m)$ & Det. & \cite{I:1997}\\
$\Oh(\UBddetnoeps)$ & Det. & This work\\
$\Oh(\D n  \log n \log m)$ & Rand. & \cite{Muthukrishnan95} \\
$\Oh(\D n \log n \log m)$ & Rand. & This work\\
 \hline
$\Oh((\S m \log^2 m)^{1/3} n)$ & Det. & \cite{MuthukrishnanR95}\\
$\Oh(\UBsdetnoeps)$ & Det. & This work\\
$\Oh(\sqrt{\S} n \log m \sqrt{\log n})$ & Rand. & This work\\
\hline
$\Oh(\I + (m \I)^{1/3} n \sqrt{ \log m})$ & Det. & \cite{Muthukrishnan95}\\
$\Oh(n \sqrt{\I \log m} + n \log n)$ & Det. & This work\\
\hline
\end{tabular}
\end{center}
\caption{\GPMFull (reporting)}
\label{tb:GPM-r}
\end{table}

\begin{table}[t]
\begin{center}
\begin{tabular}{|l|l|l|l|} 
\hline
\textbf{Time} & \textbf{Det./Rand.}  & Approx. factor&\\
\hline 
\hline
$\Oh (\eps^{-2}\D^2 n \log^3 n)$ & Det. & $(1-\eps)$ & \cite{I:1997}\\
$\timeUBddet$ & Det.  & $(1-\eps)$ & This work\\
$\Oh (\D n \log n \log m)$ & Rand. & $\log m$ & \cite{Muthukrishnan95}\\
$\Oh (\eps^{-2}\D n \log^3 n)$ & Rand. & $(1-\eps)$ & \cite{I:1997}\\
$\Oh(\eps^{-1 }\D n \log n \log m)$ & Rand. & $(1-\eps)$ & This work\\
 \hline
$\timeUBsdet$ & Det. & $(1-\eps)$  & This work\\ 
$\Oh(\sqrt{\eps^{-1}\S} n \log m \sqrt{\log n})$ & Rand.  & $(1-\eps)$ & This work\\
\hline
$\Oh(\I + (m \I)^{1/3} n \sqrt{ \log m})$ & Det. & --- & \cite{Muthukrishnan95}\\
$\Oh(n \sqrt{\I \log m} + n \log n)$ & Det.  & --- & This work\\
\hline
\end{tabular}
\end{center}
\caption{\GPMFull (counting)}
\label{tb:GPM-c}
\end{table}

\paragraph{Lower bounds for \GPM.}
We also show first lower bounds for \GPM (see Appendix~\ref{sec:lb}). We start with a simple adversary-based argument that shows that any deterministic algorithm or any Monte Carlo algorithm with constant error probability that solves \GPM must use $\Omega(\mathcal{S})$ time (Lemma~\ref{lm:det_LB} and~\ref{lm:random_LB}). We then proceed to show higher lower bounds for combinatorial algorithms by reduction from Boolean matrix multiplication\footnote{It is not clear what combinatorial means precisely. However, FFT and Boolean convolution often used in algorithms on strings are considered not to be combinatorial.} parameterized by $\D, \S, \I$ (Lemma~\ref{lm:cond_LB_s} and Corollary~\ref{cor:cond_LB_d_i}). 
All the lower bounds are presented for the reporting variant of \GPM, so they immediately apply also to the counting variant.
These bounds show that our algorithms are almost optimal, unless a radically new approach is developed. 

\subsection{Related Work}

\paragraph{Degenerate string matching.} A more general approach to dealing with noise in string data is degenerate string matching, where the set of matching characters is specified for every position of the text or of the pattern (as opposed to every character of the alphabets).
Abrahamson~\cite{Abrahamson87} showed the first efficient algorithm for a degenerate pattern and a standard text.
Later, several practically efficient algorithms were shown~\cite{Navarro:2001:NFF:511281.511284,Holub:2008:FPI:1346365.1346680}. 

\paragraph{Pattern matching with don't cares.}
In this problem, we assume $\Sigma_T = \Sigma_P = \Sigma$, where~$\Sigma$ contains a special character~--- ``don't care''. We assume that two characters of $\Sigma$ match if either one of them is the don't care character, or they are equal. The study of this problem commenced in~\cite{Fischer:1974:SOP:889566}, where a $\Oh(n \log m \log |\Sigma|)$-time algorithm was presented. The time complexity of the algorithm was improved in subsequent work~\cite{CH:2002,I:1998,Kalai:2002:EPD:545381.545468}, culminating in an elegant $\Oh(n \log m)$-time deterministic algorithm of Clifford and Clifford~\cite{CLIFFORD200753}. Clifford and Porat~\cite{CLIFFORD20101021} also considered the problem of identifying all alignments where the number of mismatching characters is at most~$k$.

\paragraph{Threshold pattern matching.} In the threshold pattern matching problem, we are given a parameter $\delta$, and we say that two characters $a, b$ match if $|a-b| < \delta$. The threshold pattern matching problem has been studied both in reporting and counting variants~\cite{ATALLAH2011674,cd0a23419254469aa9e7ac8fd7ef9f75,Cantone:2005:EA9:2154763.2154802,10.1007/11496656_7,COLE2003227,Crochemore:2002:OSH:958824.958826,doi:10.1142/S0129054108005607,ZHANG201721}. 
The best algorithm for the reporting variant of the threshold pattern matching problem is deterministic and takes linear time (after the pattern has been preprocessed). The best deterministic algorithm for the counting variant of threshold pattern matching has time $\Oh(( \log \delta + 1) n \sqrt{m \log m} )$, while the best randomised algorithm has time $\Oh(( \log \delta + 1) n \log m)$~\cite{ZHANG201721}.

In threshold pattern matching the matching relationship is described with a single interval per character, so $\I=m$.
Hence from Theorem~\ref{th:det_i} immediately follows a faster deterministic algorithm for the counting variant of the threshold pattern matching problem (Corollary~\ref{cor:threshold}).

\section{Data-Dependent Superimposed Codes}	
\label{sec:superimposed_codes}
\newcommand{\assumptions}{Given a family of $z$ sets $S_i\subseteq U$ where $|S_i|\leq k$ and $|U|=zk$}

We start by solving an open problem posed by Indyk~\cite{I:1997}: provide a deterministic algorithm for construction of a variant of data-dependent superimposed codes that is particularly suitable for the counting variant of \GPM. The solution that we present is rather involved, a reader more interested in pattern matching applications can skip this section on the first reading.

\begin{restatable}{def-restatable}{codes}\label{def:codes}
Let $S_{1},\ldots,S_{z}$ be subsets of a universe $U$. A family of sets $\mathcal{C}= \{C_{1}, \ldots, C_{|U|}\}$, where $C_{u} \subseteq [\ell]$ and $|C_{u}|=w$ for $u\in U$ is called an $(\{S_i\}, \tau)$-superimposed code if for every $S_i$ and $u \notin S_i$ we have $|C_{u}-\bigcup_{v\in S_i} C_{v}| \geq \tau$. We call $\ell$ and $w$ respectively the length and the weight of the code $\mathcal{C}$.
\end{restatable}

Suppose that the size of each $S_i$ is at most $k$, where $k$ is some fixed integer. Indyk asked if there exists a deterministic $\tilde{\Oh}((zk) /\eps^{\Oh(1)})$-time algorithm that computes an
$(\{S_i\}, (1-\eps) w)$-superimposed code of some weight $w$ and length $\ell=\Oh(k\polylog(zk))$. It can be seen that we cannot hope to construct such a code with $\ell$ independent of $\eps$.
In the following lemma we show that even if we restrict to the case of $k=1$ we still need that $\ell$ significantly depends on~$\eps$.

\begin{lemma}
 For every constant $\delta\in(0,1)$, function $f(z) =\Oh(\polylog z)$, and $z$ large enough, there exists a family of singleton sets $S_1,S_2,\ldots,S_z$ and $0 < \eps < 1$ such that any $(\{S_i\},(1-\eps)w)$-superimposed code of weight $w$ must have length length $\ell > f(z)/\eps^{\delta}$.
\end{lemma}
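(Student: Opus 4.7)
The plan is to reduce to a classical constant-weight-code question. Take $U = [z]$ and $S_i = \{i\}$ for $i \in [z]$, so $|U| = zk = z$. Because each $S_i$ is a singleton and $|C_u| = w$, the condition $|C_u - \bigcup_{v \in S_i} C_v| \geq (1-\eps)w$ for all $u \notin S_i$ simplifies to the pairwise constraint $|C_u \cap C_i| \leq \eps w$ for every pair of distinct $u, i \in [z]$. The task thus becomes: give a lower bound on $\ell$ for any family of $z$ codewords of weight $w$ in $[\ell]$ whose pairwise intersections are all at most $\eps w$.

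I would then establish two complementary bounds on $\ell$. The first is an averaging bound: setting $d_j = |\{u : j \in C_u\}|$ gives $\sum_j d_j = zw$ and
\[
\sum_j \binom{d_j}{2} \;=\; \sum_{u<v} |C_u \cap C_v| \;\leq\; \binom{z}{2}\eps w.
\]
Applying Jensen's inequality to the convex function $\binom{\cdot}{2}$ yields $\ell \binom{zw/\ell}{2} \leq \binom{z}{2}\eps w$, which after simplification gives $\ell \geq zw/(\eps z + 1)$, and therefore $\ell \geq w/(2\eps) \geq 1/(2\eps^2)$ whenever $\eps z \geq 1$ and $w \geq 1/\eps$. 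The second is an integrality bound: if $w < 1/\eps$ then $\eps w < 1$, and since each $|C_u \cap C_v|$ is an integer, all intersections must be $0$; thus the codewords are pairwise disjoint and $\ell \geq zw \geq z$. Combining the two regimes, $\ell \geq \min(z, 1/(2\eps^2))$ no matter what $w$ the adversary picks.

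I would conclude by setting $\eps := (4 f(z))^{-1/(2-\delta)}$. A direct calculation shows that with this choice $1/(2\eps^2) = 2 f(z)/\eps^\delta$. Since $f(z) = O(\polylog z)$ and $\delta \in (0,1)$, the quantity $1/(2\eps^2) = (4f(z))^{2/(2-\delta)}/2$ is itself only polylogarithmic in $z$, so $1/(2\eps^2) < z$ for all sufficiently large $z$. Hence $\ell \geq \min(z, 1/(2\eps^2)) = 1/(2\eps^2) = 2 f(z)/\eps^\delta > f(z)/\eps^\delta$, as claimed.

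The main difficulty to watch out for is that neither bound is useful on its own: the averaging bound $\ell \geq w/(2\eps)$ becomes vacuous for small $w$, while the integrality bound $\ell \geq zw$ requires $w < 1/\eps$. They become complementary only once the adversary's freedom to choose $w$ is taken into account, and the final $\eps$ is then dictated by balancing the weaker of the two alternatives, $1/(2\eps^2)$, against the target $f(z)/\eps^\delta$; the polylogarithmic growth of $f$ is precisely what ensures that the crossover lies in the useful range $\eps \in (0,1)$.
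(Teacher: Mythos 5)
Your proof is correct, and it reaches the conclusion by a genuinely different route than the paper. The paper argues by contradiction: assuming $\ell \leq f(z)/\eps^\delta$, it invokes the trivial bound $w \leq \ell$ to get $\eps w \leq \eps\ell$, then chooses $\eps = (2f(z))^{-1/(1-\delta)}$ precisely so that $\eps\ell \leq \eps^{1-\delta}f(z) = 1/2 < 1$. Integrality then immediately forces pairwise disjointness and $\ell \geq z$, contradicting $\ell = \Oh(\polylog z)$. You instead prove an \emph{unconditional} lower bound $\ell \geq \min(z, 1/(2\eps^2))$ by splitting on the adversary's choice of $w$: a Plotkin/Johnson-style double-counting argument with Jensen handles $w \geq 1/\eps$, while the integrality argument handles $w < 1/\eps$, and only afterwards do you pick $\eps = (4f(z))^{-1/(2-\delta)}$ to balance the weaker branch against the target. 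The paper's use of $w \leq \ell$ is exactly what lets it dispense with your averaging step, since under the contradiction hypothesis $w$ is automatically small; this makes its proof shorter, while yours yields a cleaner direct bound not predicated on the hypothesis. One caveat to acknowledge explicitly (the paper shares the same implicit assumption): you need $w \geq 1$ so that $\ell \geq zw$ really gives $\ell \geq z$, and you need $f(z) > 1/4$ for your $\eps$ to lie in $(0,1)$ and $\eps z \geq 1$ for the simplification of the averaging bound; both hold in the nondegenerate regime, and the case $f(z) \leq 1/4$ makes the lemma trivially true with any constant $\eps$, but it is worth stating.
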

\begin{proof}
 Consider sets $S_i=\{i\}$ for $i\in[z]$, where $z$ will be determined later.
 Let $\eps=1/(2f(z))^{\frac{1}{1-\delta}}$ and suppose that there is a $(\{S_i\},(1-\eps)w)$-superimposed code $\mathcal{C}$. 
 Then, by definition of superimposed codes and from $w\leq \ell$, for $i\ne j$ it holds 
 \[\left\{
\begin{array}{l}
|C_i-C_j|\geq (1-\eps)w=w-\eps w \geq w-\eps \ell,\\
\eps \ell\leq \eps f(z)/\eps^\delta = \eps^{1-\delta} f(z)=1/2
\end{array} 
 \right.
 \]
 so $|C_i-C_j|>w-1$.
 Hence, $|C_i-C_j|=w$ and every $C_i$ and $C_j$ must be disjoint, so $\ell\geq zw\geq z$.
Assume towards a contradiction that $\ell\leq f(z)/\eps^{\delta}$. We obtain
 \[\ell\leq f(z)/\eps^{\delta}=f(z)\cdot(2f(z))^{\frac \delta{1-\delta}}=f(z)^{\frac 1{1-\delta}} \cdot 2^{\frac \delta{1-\delta}}=\Oh(\polylog z)\cdot 2^{\frac \delta{1-\delta}} <z \]
 where the last inequality holds for sufficiently large $z$.
 This leads to contradiction and the claim follows.
\end{proof}

Therefore, one should allow $\ell=\Oh(k\polylog(zk) / \eps^{\Oh(1)})$. We give a positive answer to this natural relaxation. We start by showing an efficient deterministic algorithm for discrepancy minimization that will play an essential role in our approach.

\subsection{Discrepancy Minimization}
Let us start with a formal definition of discrepancy. 

\begin{definition}[Discrepancy]
Consider a family $\mathcal{F}$ of $z$ sets $S_i \subseteq U$, $i \in [z]$. We call a function $\chi: U \rightarrow \{-1, +1\}$ \emph{a colouring}. The \emph{discrepancy} of a set $S_i$ is defined as $\chi(S_i) = \sum_{u \in S_i} \chi(u)$, and the \emph{discrepancy} of $\mathcal{F}$ is defined as $\max_{i\in [z]}|\chi(S_i)|$. 
\end{definition}

In~\cite[Section 1.1]{Chazelle01}, Chazelle presented a construction of a colouring of small discrepancy assuming infinite precision of computation. Our deterministic algorithm will follow the outline of this construction (although crucial modifications are required in order to overcome the infinite precision assumption), so we quickly restate Chazelle's construction below. The main idea is to assign colours so as to minimize the value of an objective function $G = G(\chi, \{S_i\})$ defined as follows: let $\eps$ be chosen so that $\log \frac{1+\eps}{1-\eps} = \alpha\cdot \sqrt{\log (3z) / k}$ for some constant $\alpha > 2$, and let $p_{i}$ (respectively, $n_{i}$) be the number of $u \in S_i$ such that $\chi(u) = +1$ (respectively, $\chi(u) = -1$) for $i \in [z]$. Define
\[ G_i = (1+\eps)^{p_{i}} (1-\eps)^{n_{i}} + (1+\eps)^{n_{i}} (1-\eps)^{p_{i}} \mbox{\quad and \quad}G = \sum_{i \in [z]} G_i\]
Chazelle's construction assigns colours to one element of $U$ at a time, without ever backtracking. To assign a colour to an element $u$, it performs the following three simple steps. First, it computes $G^+$, the value of $G$ assuming $\chi(u) = +1$. Second, it computes $G^-$, the value of $G$ assuming $\chi(u) = -1$. Finally, if $G^+ \le G^{-}$, it sets $\chi(u) = +1$ and $G = G^{+}$, and otherwise it sets $\chi(u) = -1$ and $G = G^{-}$. Note that for each $i \in [z]$, we have
\begin{equation*}\label{eq:G_decreases}
\begin{split}
(1+\eps)^{p_{i}+1} (1-\eps)^{n_{i}} + (1+\eps)^{n_{i}} (1-\eps)^{p_{i}+1} + (1+\eps)^{p_{i}} (1-\eps)^{n_{i}+1} + (1+\eps)^{n_{i}+1} (1-\eps)^{p_{i}} \\
 = 2 \cdot \left( (1+\eps)^{p_{i}} (1-\eps)^{n_{i}} + (1+\eps)^{n_{i}} (1-\eps)^{p_{i}}\right)
\end{split}
\end{equation*}
and therefore the value of $G$ can only decrease. This implies an important property of Chazelle's construction: since at initialization we have $n_i = p_i = 0$ for all $i \in [z]$ and therefore $G = 2z$, we have $G_i \le G \le 2z$ for $i \in [z]$ at any moment of the construction. Let us show that small values of $G_i$'s imply small discrepancy. In order to do this, we follow the outline of~\cite{Chazelle01}, but use a slightly higher bound for $G_i$'s to be able to apply this lemma later.

\begin{lemma}[\cite{Chazelle01}]\label{lm:discrepancy-inf}
If after all elements of $U$ have been assigned a colour we have $G_i \le 3z$ for all $i \in [z]$, then the discrepancy of the resulting colouring is at most $\alpha \cdot \sqrt{k\log (3z)}$ for any constant $\alpha>2$.
\end{lemma}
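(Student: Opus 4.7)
The plan is to convert the hypothesis $G_i \le 3z$ into an upper bound on $|p_i - n_i|$ by taking logarithms in the definition of $G_i$. Fix $i \in [z]$; I set $s_i := p_i + n_i = |S_i| \le k$ and $d_i := |p_i - n_i|$, and observe that $G_i$ is symmetric under $p_i \leftrightarrow n_i$, so I may assume $p_i \ge n_i$ and write $d_i = p_i - n_i$. Both summands in $G_i$ are positive, so I keep only the first (the larger one in this regime) and substitute $p_i = (s_i+d_i)/2$, $n_i = (s_i-d_i)/2$:
\[
\log G_i \;\ge\; \log\bigl((1+\eps)^{p_i}(1-\eps)^{n_i}\bigr) \;=\; \tfrac{s_i}{2}\log(1-\eps^2) \;+\; \tfrac{d_i}{2}\log\tfrac{1+\eps}{1-\eps}.
\]

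Now writing $L := \log\tfrac{1+\eps}{1-\eps} = \alpha\sqrt{\log(3z)/k}$ and using $\log G_i \le \log(3z)$ together with $s_i \le k$, I rearrange to
\[
d_i \;\le\; \frac{2\log(3z) + k\cdot|\log(1-\eps^2)|}{L}.
\]
The plan for the second numerator term is to bound $|\log(1-\eps^2)|$ in terms of $L$: since $\eps = \tanh(L/2) \le L/2$ I have $\eps^2 \le L^2/4$, and combined with the elementary inequality $-\log(1-x) \le 2x$ valid for $x \in [0,1/2]$, this yields $|\log(1-\eps^2)| \le L^2/2$ whenever $\eps \le 1/\sqrt{2}$. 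Substituting,
\[
d_i \;\le\; \frac{2\log(3z)}{L} + \frac{kL}{2} \;=\; \sqrt{k\log(3z)}\cdot\Bigl(\tfrac{2}{\alpha} + \tfrac{\alpha}{2}\Bigr),
\]
and for any constant $\alpha > 2$ the parenthesised factor is strictly less than $\alpha$, giving $d_i < \alpha\sqrt{k\log(3z)}$. The remaining case $\eps > 1/\sqrt{2}$ forces $k = O(\alpha^2\log(3z))$, in which range the trivial estimate $d_i \le s_i \le k \le \alpha\sqrt{k\log(3z)}$ already suffices.

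The main delicacy is the strictly negative contribution $\tfrac{s_i}{2}\log(1-\eps^2)$, which aids the lower bound on $\log G_i$ but degrades the eventual upper bound on $d_i$; I need my estimate $|\log(1-\eps^2)| \le L^2/2$ to be sharp enough that the overall coefficient $\tfrac{2}{\alpha}+\tfrac{\alpha}{2}$ stays below $\alpha$. The relaxation of the hypothesis from Chazelle's $G_i \le 2z$ to $G_i \le 3z$ only inflates the numerator by an additive $O(1)$ and is absorbed harmlessly, whereas the strict inequality $\alpha > 2$ is essential: the function $\alpha \mapsto \tfrac{2}{\alpha}+\tfrac{\alpha}{2}$ attains its minimum value $2$ precisely at $\alpha = 2$, so any strict improvement requires $\alpha > 2$.
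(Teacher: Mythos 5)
Your argument is correct and structurally mirrors the paper's: drop the smaller summand of $G_i$, take logarithms to get a linear relation in $\log(1-\eps^2)$ and $L := \log\frac{1+\eps}{1-\eps}$, bound $|\log(1-\eps^2)|$ by a multiple of $L^2$, and substitute $L=\alpha\sqrt{\log(3z)/k}$. The one genuine point of divergence is how the quadratic bound on $|\log(1-\eps^2)|$ is obtained. The paper invokes the tight inequality $\log(1-\eps^2)\ge-\bigl[\tfrac12\log\tfrac{1+\eps}{1-\eps}\bigr]^2=-L^2/4$, valid for all $\eps\in(0,1)$; this feeds into the coefficient $\tfrac2\alpha+\tfrac\alpha4$, which stays below $\alpha$ already for $\alpha\ge\sqrt{8/3}$. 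You instead assemble the bound from two elementary pieces, $\eps\le L/2$ and $-\log(1-x)\le2x$, which yields the weaker $|\log(1-\eps^2)|\le L^2/2$ (a factor of two worse) and only for $\eps\le1/\sqrt2$, forcing a separate case. Your coefficient becomes $\tfrac2\alpha+\tfrac\alpha2$, which is below $\alpha$ exactly when $\alpha>2$ --- so your route matches the lemma's stated constant exactly where the paper's has slack. The trade-off is clear: the paper's inequality is slightly sharper and uniform; yours is more elementary and transparent (it is really the same ingredient $L\ge2\eps$, used with the first-order bound $-\log(1-x)\le2x$ rather than integrated into the tight form).

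One small caution in the large-$\eps$ branch: you write that $\eps>1/\sqrt2$ forces $k=O(\alpha^2\log(3z))$ and then apply $k\le\alpha\sqrt{k\log(3z)}$, but that last inequality is \emph{equivalent} to $k\le\alpha^2\log(3z)$, with constant exactly $1$, which an unqualified $O(\cdot)$ does not guarantee. What actually saves you is that $\eps>1/\sqrt2$ gives $L>\log\bigl(\tfrac{1+1/\sqrt2}{1-1/\sqrt2}\bigr)=\log(3+2\sqrt2)>1$ in both natural and base-$2$ logarithms, and hence $k<\alpha^2\log(3z)/L^2<\alpha^2\log(3z)$ with the right constant. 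Stating that explicitly (rather than hiding it in $O(\cdot)$) would close the gap.
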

\begin{proof}
After all elements of $U$ have been assigned a colour, we have
\[ G_i = (1-\eps^2)^{\frac{|S_i|-\chi(S_i)}{2}} \left[ (1+\eps)^{\chi(S_i)} + (1-\eps)^{\chi(S_i)} \right].\]
Consequently,
\[ (1-\eps^2)^{\frac{|S_i|-\chi(S_i)}{2}} (1+\eps)^{\chi(S_i)} \le 3z.\]
By taking the logarithm of both sides, we obtain
\[|S_i| \log(1-\eps^2) + \chi(S_i) \log\frac{1+\eps}{1-\eps} \le 2 \log (3z).\]
For all $0 < \eps < 1$,  
\[\log (1-\eps^2) \ge - \left[ \frac{1}{2} \log \frac{1+\eps}{1-\eps} \right]^2\]
which implies that
\[ -|S_i| \left[ \frac{1}{2} \log \frac{1+\eps}{1-\eps} \right]^2 + \chi(S_i) \log\frac{1+\eps}{1-\eps} \le 2 \log (3z).\]
Substituting $\log \frac{1+\eps}{1-\eps} = \alpha\cdot \sqrt{\log(3z) / k}$, we finally obtain for any $\alpha>2$:
\begin{equation*}
\chi(S_i) \le  2\log (3z) / (\alpha \cdot \sqrt{\log(3z) / k}) + (\alpha/4) \cdot |S_i|\sqrt{\log(3z) / k} \le \alpha \cdot \sqrt{k \log (3z)}\qedhere
\end{equation*}
\end{proof}

We will show a deterministic algorithm that computes a colouring for which the values $G_i$ are bounded by $3z$. By Lemma~\ref{lm:discrepancy-inf}, this implies that the discrepancy is bounded by $\alpha \cdot \sqrt{k\log (3z)}$. We must overcome several crucial issues: first, we must explain how to compute $\eps$. Second, we must design an algorithm that uses only multiplications and additions so as to be able to control the accumulated precision error. And finally, we must explain how to remove the assumption of infinite precision and to ensure that we never operate on numbers that are too small.

\begin{proposition}\label{prop:eps}
Assume $k > \log (3z)$. There is a deterministic algorithm that computes $\eps\in (0,1)$ such that $\log \frac{1+\eps}{1-\eps} = \alpha\cdot \sqrt{\log(3z)/k}$ for some constant $\alpha > 2$ in $\Oh(\log (zk))$ time. Both $\eps$ and $1-\eps$ are bounded from below by $1/(kz)^{\Oh(1)}$.
\end{proposition}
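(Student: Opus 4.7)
The equation $\log\frac{1+\eps}{1-\eps}=\alpha\sqrt{\log(3z)/k}$ has the explicit solution
$\eps = (e^{\beta}-1)/(e^{\beta}+1)$ with $\beta=\alpha\sqrt{\log(3z)/k}$,
so the plan is to fix a concrete target $\alpha_0=3$ and compute $\eps$ numerically with enough precision that the \emph{effective} $\alpha$ realised by the returned $\eps$ is $3\pm o(1)$, hence certainly $>2$.  I will work in the Word RAM with word length $w=\Theta(\log(zk))$, represent all intermediate quantities as fixed-point numbers of $\Theta(\log(zk))$ bits, and prove that each elementary operation costs $\Oh(1)$ so that the overall budget $\Oh(\log(zk))$ is preserved.

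The computation proceeds in four steps. (1) Compute $L=\lceil\log(3z)\rceil$ by bit tricks (constant time), and by hypothesis $L<k$, so $q:=L/k\in(0,1]$ can be obtained by one division. (2) Compute $\sqrt{q}$ to $p=\Theta(\log(zk))$ bits by Newton's iteration $x\leftarrow\tfrac12(x+q/x)$; quadratic convergence gives $\Oh(\log p)=\Oh(\log\log(zk))$ iterations, each $\Oh(1)$ time. Set $\beta:=3\sqrt{q}$. (3) Compute $r:=e^{\beta}$ by the truncated Taylor series
\[
r \;=\; \sum_{n=0}^{N}\frac{\beta^{n}}{n!},\qquad N=\Theta(\log(zk)),
\]
evaluated by Horner's scheme in $\Oh(N)$ word operations. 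Since $\beta\le 3$, the truncation error is bounded by $\beta^{N+1}/(N+1)!=2^{-\Omega(\log(zk))}$. (4) Return $\eps:=(r-1)/(r+1)$. The dominating cost is the exponential, giving $\Oh(\log(zk))$ total.

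Verification has two pieces. For the effective $\alpha$: the absolute errors in $\sqrt{q}$ and in $r$ are both $2^{-\Omega(\log(zk))}$, and the map $r\mapsto\log\frac{r}{1}$ is $1$-Lipschitz on $r\in[1,e^{3}+1]$, so the computed $\eps$ satisfies $\log\frac{1+\eps}{1-\eps}=\beta\bigl(1+o(1)\bigr)$; the implicit $\alpha=3(1+o(1))$ is therefore $>2$ for $z$ and $k$ large enough (for finitely many small cases, hard-code $\eps$). For the lower bounds on $\eps$ and $1-\eps$: using $k>L\ge 1$, we have $\beta\in(0,3]$, hence $r\in(1,e^{3}]$, so $1-\eps=2/(r+1)\ge 2/(e^{3}+1)=\Omega(1)$, which is certainly at least $1/(kz)^{\Oh(1)}$. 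On the other side, $\beta\ge 3/\sqrt{k}$ and on $(0,3]$ the function $\beta\mapsto(e^{\beta}-1)/(e^{\beta}+1)$ is bounded below by $\beta/4$, whence $\eps\ge 3/(4\sqrt{k})\ge 1/(kz)^{\Oh(1)}$.

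The main obstacle is the careful precision analysis: propagating the $2^{-\Theta(\log(zk))}$ absolute errors through Newton's iteration for the square root, through the $\Oh(\log(zk))$-term Horner evaluation of $e^{\beta}$, and finally through the division $(r-1)/(r+1)$, and showing that the accumulated error is still small enough to keep $\alpha>2$. Each step multiplies the relative error by at most a constant factor, and Horner accumulates $\Oh(\log(zk))$ units in the last place; choosing the initial precision as a sufficiently large multiple of $\log(zk)$ absorbs all of these factors while preserving the $\Oh(\log(zk))$ time bound.
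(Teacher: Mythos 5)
Your proof is correct in outline and achieves the required conclusion, but it takes a genuinely different computational route from the paper.

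The paper's construction deliberately avoids all numerical analysis: it computes $\alpha_1=\lceil\log(3z)\rceil/k$ by a bit-counting loop, rounds it up to a nearby power of two, takes its square root (a half-integer power of two), and multiplies by a fixed constant to obtain $\alpha_3<64$; it then sets $\eps = \alpha_3/(2+\alpha_3)$ directly. The trick is the elementary inequality $\log(1+x)>x/(8\sqrt{2})$ on $(0,64)$, which certifies that the resulting $\alpha=\log(1+\alpha_3)/\sqrt{\log(3z)/k}$ exceeds $2$ without ever evaluating a logarithm, a square root of a general rational, or an exponential; every intermediate quantity is a simple dyadic rational (possibly times $\sqrt{2}$), the computation after the $\Oh(\log(zk))$ loop is $\Oh(1)$, and there is no rounding error to propagate. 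Your proof instead fixes a target $\alpha_0$ and inverts the defining equation numerically via Newton iteration for $\sqrt{q}$, a truncated Taylor series for the exponential, and a final M\"obius transform. This is a valid plan, but it re-imports precisely the fixed-point error analysis that the paper's construction was designed to sidestep, and you correctly flag that carrying out that analysis carefully is the remaining work; the proposal leaves it sketched rather than done.

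Two smaller points. First, the paper's $\log$ is $\log_2$ (it is computed by incrementing $t_1$ until $2^{t_1}\geq 3z$), so returning $\eps=(e^{\beta}-1)/(e^{\beta}+1)$ realises $\log_2\frac{1+\eps}{1-\eps}=\beta\log_2 e$, i.e.\ an effective $\alpha\approx 3\log_2 e\approx 4.3$ rather than $3$. This is still $>2$, so the conclusion survives, but if you really want $\alpha\approx 3$ you should compute $2^{\beta}$ (say via the Taylor series of $e^{\beta\ln 2}$ with $\ln 2$ stored to sufficient precision). Second, the map $r\mapsto\log_2 r$ on $[1,e^3+1]$ is not $1$-Lipschitz (the derivative at $r=1$ is $1/\ln 2>1$); it is $\Oh(1)$-Lipschitz, which is all that the argument needs, but the constant is worth stating correctly since it feeds into the precision budget.
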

\begin{proof}
We present the algorithm as a sequence of four steps. Let $\alpha_1 = \lceil \log(3z)\rceil /k$, where $\lceil \log (3z) \rceil$ is computed in $\Oh(\log z)$ time by incrementing a counter $t_1$ until $2^{t_1}\geq 3z$. As $k > \log (3z)$, it holds that $\alpha_1 < 2$. Compute $\alpha_2$ such that $\sqrt{\alpha_1} \le \alpha_2 \le \sqrt{2 \alpha_1}$ by incrementing a counter $t_2$ until $\alpha_1 \le 2^{-t_2} \le 2\alpha_1$ and returning $2^{-t_2/2}$. We have $\alpha_2 < 2$. This step takes $\Oh(\log k)$ time. For $0 < x < 64$, we have $\log(1+x) > x / 8\sqrt 2$. For $\alpha_3 = 16 \sqrt 2 \cdot \alpha_2$, we have $\alpha_3 < 64$ and $\log (1+\alpha_3) > 2 \alpha_2$. Finally, we have $\log \frac{1+\eps}{1-\eps} = \log \left(1+\frac{2}{1/\eps-1}\right)$. It follows that we can take $\frac{2}{1/\eps-1} = \alpha_3$, or equivalently, $\eps = 1-\frac{2}{2+\alpha_3}$, which concludes the proof.

Note that both $\eps$ and $1-\eps$ are bounded from below by $1/(kz)^{\Oh(1)}$:
\[\eps  = 1-\frac{2}{2+\alpha_3} = 1 - \frac{1}{1+8\sqrt{2} \alpha_2} \ge 1 - \frac{1}{1+8\sqrt{2}\sqrt{\alpha_1}} \ge 1 - \frac{\sqrt{k}}{\sqrt{k}+8\sqrt{2}} = \frac{8\sqrt{2}}{\sqrt{k}+8\sqrt{2}}\]
\[1-\eps  = \frac{2}{2+\alpha_3} > \frac{2}{2+64} = 1/33\qedhere\]
\end{proof}

\noindent
We can implement Chazelle's construction to use only multiplications and additions via segment trees.

\begin{proposition}\label{prop:add_multiply}
Assume that $(1+\eps)$ and $(1-\eps)$ are known. Chazelle's construction can be implemented via $\Oh(zk\log z)$ addition and multiplication operations.
\end{proposition}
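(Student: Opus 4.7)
The plan is to maintain, for every $i \in [z]$, the two quantities
\[
A_i \;=\; (1+\eps)^{p_i}(1-\eps)^{n_i}, \qquad B_i \;=\; (1+\eps)^{n_i}(1-\eps)^{p_i},
\]
so that $G_i = A_i + B_i$ and $G = \sum_{i\in[z]} G_i$. Both $A_i$ and $B_i$ start at $1$. When an element $u \in U$ is assigned colour $+1$ (resp.\ $-1$), for every $i \in L_u := \{i : u \in S_i\}$ we multiply $A_i$ by $(1+\eps)$ and $B_i$ by $(1-\eps)$ (resp.\ swap the two multipliers), leaving every $G_j$ with $j \notin L_u$ unchanged. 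A direct calculation from the definition of $G_i$ gives $G_i^+ = (1+\eps)A_i + (1-\eps)B_i$ and $G_i^- = (1-\eps)A_i + (1+\eps)B_i$, so
\[
G^+ - G^- \;=\; 2\eps \sum_{i \in L_u}(A_i - B_i).
\]
Hence Chazelle's greedy rule reduces to computing the sign of $\Delta_u := \sum_{i \in L_u}(A_i - B_i)$ and setting $\chi(u) = +1$ iff $\Delta_u \le 0$; the only arithmetic invoked on the maintained values is addition, subtraction, multiplication, and a single sign test per element.

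For efficiency, I would first build, in one linear pass through the input, the inverse lists $\{L_u\}_{u \in U}$, whose total size is bounded by $\sum_{i\in[z]}|S_i| \le zk$. The pairs $(A_i,B_i)$ are then stored in a segment tree over $[z]$ that supports point access and multiplicative point update, each in $O(\log z)$ additions and multiplications. Processing element $u$ requires $O(|L_u|)$ point accesses (to assemble $\Delta_u$ by summing contributions $A_i - B_i$) followed by $O(|L_u|)$ point updates (to refresh the affected $(A_i, B_i)$ according to the chosen colour). Summing over $u \in U$ yields a total of
\[
O\Bigl(\log z \cdot \sum_{i\in[z]}|S_i|\Bigr) \;=\; O(zk \log z)
\]
additions and multiplications, as claimed.

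The main thing to verify is that the identity $G^+ - G^- = 2\eps\,\Delta_u$ really captures the greedy step of Chazelle's construction, so the decision can be taken without ever materialising $G$ itself; this is a routine expansion from the definition of $G_i$, but it is the point where one must be careful to rule out any hidden arithmetic beyond $+$, $-$ and $\times$, in particular avoiding the exponentiations $(1\pm\eps)^{p_i}$ and $(1\pm\eps)^{n_i}$ that appear in the definition of $G_i$. Once this is checked, the segment-tree bookkeeping delivers the stated operation count, and correctness of the produced colouring follows from the monotonicity argument already recalled before Lemma~\ref{lm:discrepancy-inf}, which guarantees $G_i \le G \le 2z < 3z$ throughout.
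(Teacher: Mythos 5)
Your proof is correct, and the operation count checks out, but you take a genuinely different route from the paper. The paper stores the pairs $(A_i,B_i)$ in the \emph{leaves} of a complete binary tree whose internal nodes keep partial sums, so that the root maintains $G$ explicitly; the greedy decision is then made by tentatively assigning a colour, propagating the $|L_u|$ leaf updates up to the root, and comparing the resulting $G^+$ against $G^-$. You instead derive the algebraic identity $G^+ - G^- = 2\eps\sum_{i\in L_u}(A_i-B_i)$ and make the decision purely locally, by computing the sign of $\Delta_u$, without ever materialising $G$. This is a clean observation, and in fact a plain array suffices for your version: since you never issue range queries, the segment tree you invoke is superfluous (point access is $\Oh(1)$, and a point update with no ancestors to refresh is also $\Oh(1)$), so your implementation actually uses only $\Oh(zk)$ arithmetic operations, improving on the stated bound.

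The trade-off is worth noting. The paper's seemingly heavier bookkeeping is not an accident: Theorem~\ref{th:discrepancy} reuses exactly this tree structure to control the accumulated error under precision $\Delta$, and it is crucial there that every arithmetic operation adds nonnegative quantities, so a relative-error bound at the leaves propagates multiplicatively up the tree without cancellation. Your decision rule relies on the subtraction $A_i - B_i$, whose relative error can blow up when $A_i\approx B_i$, so it does not slot directly into the finite-precision argument that follows. For the proposition as stated (exact arithmetic) this is not an issue, and your correctness argument via the monotonicity $G^+ + G^- = 2G$ and Lemma~\ref{lm:discrepancy-inf} is sound; just be aware that the segment tree in the paper is there to serve Theorem~\ref{th:discrepancy}, not merely to hit the $\Oh(zk\log z)$ count.
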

\begin{proof}
We maintain a complete binary tree on top of $\{1,2,\ldots,2^t\}$, where $2^{t-1} < z \le 2^t$. At any moment, the $(2i-1)$-th leaf stores $(1+\eps)^{p_{i}} (1-\eps)^{n_{i}}$ and the $(2i)$-th leaf stores $(1+\eps)^{n_{i}} (1-\eps)^{p_{i}}$ for all $i \in [z]$, while all the other leaves store value $0$. Each internal node stores the sum of the values in the leaves of its subtree. In particular, the root stores the value $G$. To update $G$ after setting $\chi(u)$ for $u \in U$, we must update the values stored in the $(2i-1)$-th and $(2i)$-th leaves for all $i$ such that $u \in S_i$, as well as the sums in the $\Oh(\log z)$ internal nodes above these leaves. For each leaf, we use one multiplication operation (we must multiply the value by $(1+\eps)$ or $(1-\eps)$ as appropriate), and for each internal node we use one addition operation. In total, we need $\Oh(\sum_{i \in [z]} |S_i| \log z)=\Oh(zk\log z)$ addition and multiplication operations.
\end{proof}

We are now ready to remove the infinite precision assumption and to show the final result of this section. Our algorithm will follow the outline of Proposition~\ref{prop:add_multiply}, but the addition and the multiplication operations will be implemented with precision $\Delta$. Moreover, we will guarantee that the algorithm only works with values in $[\Delta, \Oh(z)]$, which will imply that both arithmetic operations can be performed in constant time and that the algorithm takes $\Oh(zk\log z)$ time. 

\begin{theorem}\label{th:discrepancy}
\assumptions, one can find deterministically in $\Oh(zk\log z)$ time a colouring $\chi : U \rightarrow \{-1, +1\}$ such that $\max_{i \in [z]} |\chi(S_i)| \le \alpha \cdot \sqrt{k\log (3z)}$ for some constant $\alpha > 2$.
\end{theorem}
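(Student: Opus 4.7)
The plan is to run Chazelle's online greedy colouring via the segment tree of Proposition~\ref{prop:add_multiply} with $\eps$ computed by Proposition~\ref{prop:eps}, but with every arithmetic operation carried out at a fixed precision $\Delta = 1/(zk)^{\Oh(1)}$. Processing the elements of $U$ one by one, for each $u$ I would read the tentative tree values $\hat G^{+}$ and $\hat G^{-}$ corresponding to $\chi(u)=+1$ or $\chi(u)=-1$, commit to whichever is smaller, and propagate the corresponding multiplications by $1\pm\eps$ through the $\Oh(\deg(u))$ affected leaves and their $\Oh(\log z)$ segment-tree ancestors. Once all colours are assigned, Lemma~\ref{lm:discrepancy-inf} delivers the discrepancy bound $\alpha \cdot \sqrt{k\log(3z)}$, so the whole task reduces to making the finite-precision simulation faithful enough for that lemma to apply.

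Two invariants need to be maintained throughout, and they constitute the heart of the argument. The first is that every value stored in the tree lies in $[\Delta, \Oh(z)]$: the upper bound follows automatically from Chazelle's invariant $G \leq 2z$ together with our rounded values overestimating the exact ones, while the lower bound is enforced by clamping any value that would otherwise fall below $\Delta$ up to $\Delta$. This guarantees that each addition and multiplication operates on $\Oh(\log(zk))$-bit operands and therefore runs in $\Oh(1)$ time on the Word RAM, so that the total number of operations $\Oh\bigl(\sum_u \deg(u) \log z\bigr) = \Oh(zk\log z)$ translates directly into the claimed running time. Since Proposition~\ref{prop:eps} already supplies $\eps$ and $1-\eps$ in the same $\Oh(\log(zk))$-bit precision window, all the required constants fit in the same machine words.

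The second and harder invariant is that the computed quantities still satisfy $\hat G_i \leq 3z$ for every $i$, which is exactly the hypothesis of Lemma~\ref{lm:discrepancy-inf}. The main obstacle will be bounding the cumulative rounding error so that this inequality survives the entire computation. My approach is to round every intermediate result \emph{upwards}, so that $\hat G_i \geq G_i^{\mathrm{exact}}$ and $\hat G \geq G^{\mathrm{exact}}$ at all times; since Chazelle's analysis keeps $G^{\mathrm{exact}} \leq 2z$, it is enough to show that the surplus $\hat G - G^{\mathrm{exact}}$ never exceeds $z$. I would analyse this by propagating the $\leq \Delta$ additive error through the at most $k$ multiplications by $1\pm\eps$ that any single leaf can undergo and through the $\Oh(\log z)$ sums accumulated on the tree paths, exploiting the polynomial lower bounds $\eps, 1-\eps \geq 1/(kz)^{\Oh(1)}$ from Proposition~\ref{prop:eps} to ensure that the compounded error stays bounded. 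Choosing the constant in $\Delta = 1/(zk)^{c}$ large enough then forces the total surplus below $z$ over all $\Oh(zk\log z)$ operations, the invariant is preserved, and plugging it into Lemma~\ref{lm:discrepancy-inf} yields the stated discrepancy bound within the desired time.
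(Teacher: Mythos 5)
Your high-level plan is the same as the paper's: run Chazelle's greedy colouring over the segment tree of Proposition~\ref{prop:add_multiply} with the $\eps$ of Proposition~\ref{prop:eps}, carry out arithmetic at a polynomially small precision $\Delta$, and invoke Lemma~\ref{lm:discrepancy-inf} at the end. The time accounting and the choice of $\Delta=1/(zk)^{\Oh(1)}$ are also aligned. The difference, and the place where your argument has a real gap, is the mechanism you use to keep stored values from becoming tiny.

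You clamp any leaf value that would drop below $\Delta$ back up to $\Delta$ and round all arithmetic upward, so that the computed $\hat G$ overestimates the true $G$. Two things go wrong with this. First, your statement that ``the upper bound follows automatically from Chazelle's invariant $G\le 2z$ together with our rounded values overestimating the exact ones'' is circular: knowing $\hat G\ge G^{\text{exact}}\le 2z$ gives no upper bound on $\hat G$; an upper bound on $\hat G$ is precisely what you still need to prove. Second, and more seriously, the clamp can introduce a surplus at a leaf that is subsequently \emph{amplified}, not damped. A leaf stores $(1+\eps)^{p_i}(1-\eps)^{n_i}$; during the run the imbalance $n_i-p_i$ can grow to order $\sqrt{k\log(3z)}$, driving the true leaf value down to roughly $(1-\eps)^{\Theta(\sqrt{k\log(3z)})}$. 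If this falls below $\Delta$ and you clamp to $\Delta$, every later multiplication by $(1+\eps)$ multiplies the clamp-induced surplus by $(1+\eps)$, so after $a$ such steps the surplus at that one leaf is $\Theta(\Delta\,(1+\eps)^a)$. To bound $a$ and hence $(1+\eps)^a$ by something polynomial you would need the intermediate discrepancy bound $|p_i-n_i|=\Oh(\sqrt{k\log(3z)})$, but that bound follows from $G_i\le 3z$, which is exactly the invariant you are in the middle of establishing. So the argument ``it is enough to show that the surplus $\hat G-G^{\text{exact}}$ never exceeds $z$'' cannot be closed by propagating $\Delta$-sized errors through $k$ multiplications as you sketch; it needs either a separate a priori lower bound on the true leaf values (which you neither state nor prove, and which itself requires a careful induction), or a different mechanism.

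The paper uses a different mechanism precisely to dodge this circularity: each leaf carries a counter $r_i$ recording how many pending $(1-\eps)$ factors were withheld to keep the stored number above $\Delta$, and while $r_i>0$ the leaf contributes $0$ to the internal-node sums. Treating a tiny leaf as $0$ is an \emph{underestimate} of at most $\Delta$, giving a clean $\Oh(z\Delta)$ additive slack per step independent of any discrepancy bound, and the withheld $(1-\eps)$ factors are paid down whenever the leaf is later multiplied by $(1+\eps)$, so the stored value is never inflated. The paper also explicitly lists the effect of committing to the wrong sign of $\chi(u)$ (because $\hat G^+,\hat G^-$ are only approximate) as a third error source and folds it into the per-step error bound; your write-up ignores this, though it contributes an $\Oh(\mathrm{err})$ drift in the true $G$ per step. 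I'd recommend replacing the clamp-up device with the counter mechanism and then redoing the error induction along the paper's lines, bounding the accumulated absolute error after $j$ steps by $\Oh(\Delta\,j\,z^2k\log z)$ and choosing $\Delta$ accordingly.
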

\begin{proof}
 Let $n = |U|$ and $|S_i| \leq k \leq n$. If $k \le \log (3z)$, then for any colouring $\max_{i \in [z]} |\chi(S_i)| \leq k \le \sqrt{k\log (3z)}$. From now on, we assume $k > \log (3z)$. 
 
We first compute $\eps$ as explained in Proposition~\ref{prop:eps}. After having computed $\eps$ and $1-\eps$, the algorithm initializes a complete binary tree on top of $\{1,2,\ldots,2^t\}$, where $2^{t-1} < z \le 2^t$. The algorithm assigns $1$ to every leaf $i \in [2z]$, and $0$ to all other leaves, and then performs a bottom-up traversal to compute the values of inner nodes as the sum of their children.

Then we proceed as in Proposition~\ref{prop:add_multiply}, that is to update $G$ after setting $\chi(u)$ for $u \in U$, we update the values stored in the $(2i-1)$-th and $2i$-th leaves for all $i$ such that $u \in S_i$, as well as the sums in the $\Oh(\log z)$ internal nodes above these leaves. We implement addition and multiplication with precision $\Delta$, which means that instead of the true value $y$ the algorithm obtains a value $y'$ such that $(1-\Delta) y \le y' \le (1+\Delta) y$. Moreover, to ensure that the values the algorithm operates with are never too small, we apply the following workaround. For each leaf $i$, we store a counter $r_i$ denoting that the value in the leaf $i$ should have been multiplied by $(1-\eps)^{r_i}$, but was not in order not to store numbers below~$\Delta$. Whenever the value in the leaf is multiplied by $(1+\eps)$, we also multiply it by $(1-\eps)^{r'}$, where $r' \le r_i$ is the largest integer such that the value is still larger than $\Delta$ and  update $r_i = r_i - r'$. Then we update the values in the inner nodes on the path from the leaf $i$ to the root and, while summing the values of children, we treat the values in the leaves that have $r_i > 0$ (in other words, the values that are smaller than~$\Delta$) as zeros.

Recall that after assigning a colour to an element $u \in U$, we must update the value~$G$. We claim that at any moment, the absolute difference (``the absolute error'') between the value~$G$ computed by our algorithm and the value $G$ computed by the algorithm of Proposition~\ref{prop:add_multiply} with infinite precision, is $\Oh(\Delta \cdot n z^2 k \log z)$. Below we call the latter value ``the true value'' of~$G$. It follows that we can choose $\Delta$ small enough so that after our algorithm has assigned colours to all elements of $U$, the value of $G$ will be bounded by $3z$. By Lemma~\ref{lm:discrepancy-inf}, this implies that the discrepancy of the constructed colouring is bounded by $\le \alpha \cdot \sqrt{k\log (3z)}$, where the constant $\alpha$ is as in Proposition~\ref{prop:eps}. 

We show the claim by induction. Namely, we show that after we have assigned colours to~$j$ elements, the absolute error is $\Oh(\Delta \cdot j z^2 k \log z)$. For $j = 0$, the claim obviously holds. Consider now $j > 0$. The value $G$ computed by the algorithm can be different from the true value of $G$ at this step for three reasons: 
\begin{enumerate}
\item The values in some leaves are replaced with zeros.
\item Addition and multiplication are implemented with precision $\Delta$.
\item We decide the colouring based on approximate values of $G^+, G^-$.
\end{enumerate}
Now we bound the absolute error between the value of $G$ of the solution computed by our algorithm and the true value of $G$ at this step.
Recall that the total number of arithmetic operations in the algorithm is $\Oh(zk\log z)$. It follows that $G^+, G^-$ are the values of arithmetic expressions with $\Oh(zk\log z)$ addition and multiplication operations. Underestimating the values of leaves, we additionally decrease the values $G^{+}, G^{-}$ by at most $2 \Delta z$. Implementing addition and multiplication with precision $\Delta$, we compute the sum of the remaining terms in the arithmetic expressions with precision $\Oh(\Delta \cdot zk \log z)$. 
By the definition of $G^{+}, G^{-}$ and the induction assumption, the true values of $G^{+}, G^{-}$ at this step are bounded from above by $6z$. It therefore follows that at step $j$ we add at most $\Oh(\Delta \cdot z^2 k \log z)$ to the absolute error. By the induction assumption, the total absolute error at step $j$ is 
\[\Oh(\Delta \cdot z^2 k \log z) + \Oh(\Delta \cdot (j-1) z^2 k \log z) = \Oh(\Delta \cdot j z^2 k \log z).\]
This implies that the value of $G$ can be bounded by $3z$ for $\Delta$ small enough, which concludes the proof.
\end{proof}

Theorem~\ref{th:discrepancy} can be used to partition the universe $U$ into a small number of subsets such that the intersection of every subset of the partition and every set $S_i$ is small.  We start with a simple technical lemma.

\begin{lemma}\label{lem:iterations}
Consider a process that starts with $x_{0} = x$, and keeps computing $x_{i+1} := \lfloor x_{i}(1/2+1/\sqrt{x_{i}}) \rfloor $ as long as $x_{i} > 4$. The process ends after at most $\log x+\Oh(\log^{*}x)$ steps. 
\end{lemma}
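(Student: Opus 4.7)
The plan is to analyze the recursion $x_{i+1} \le x_i/2 + \sqrt{x_i}$ (obtained by dropping the floor in the definition) on a logarithmic scale, accounting for the deviation from pure halving via a phase decomposition. Taking $\log_2$ gives
\[ \log_2 x_{i+1} \le \log_2 x_i - 1 + \log_2(1 + 2/\sqrt{x_i}), \]
so each step reduces $\log_2 x_i$ by at least $1 - \log_2(1 + 2/\sqrt{x_i})$: a correction that is negligible when $x_i$ is large but becomes $\Theta(1)$ near the termination threshold.

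Next I would group iterations into phases: phase $k \ge 1$ consists of those iterations with $x_i \in (2^{2^k}, 2^{2^{k+1}}]$. Since the sequence strictly decreases while $x_i > 4$, it traverses phases in decreasing order of $k$, starting at some phase $K = \Oh(\log^* x)$ determined by $x$. Within phase $k \ge 2$, we have $\sqrt{x_i} > 2^{2^{k-1}}$, hence $\log_2(1 + 2/\sqrt{x_i}) = \Oh(2^{-2^{k-1}})$, and each step shaves off at least $1 - \Oh(2^{-2^{k-1}})$ from $\log_2 x_i$. Because the logarithmic width of the phase is $2^k$, the number of iterations spent there is at most $2^k / (1 - \Oh(2^{-2^{k-1}})) \le 2^k + \Oh(1)$, using the routine inequality $2^k \cdot 2^{-2^{k-1}} = \Oh(1)$ for $k \ge 2$ (equal to $1$ at $k=2$ and geometrically smaller afterwards).

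Summing across the $\Oh(\log^* x)$ phases, the main terms $\sum_k 2^k$ telescope to the total logarithmic distance travelled, which is at most $\log x$, while the per-phase overheads contribute at most $\Oh(\log^* x)$. The very small phases (where $x_i \le 256$) collectively contribute only $\Oh(1)$ iterations, since $x_i$ must decrease by at least one at each step there and the range is bounded. Combining these yields the claimed bound of $\log x + \Oh(\log^* x)$ iterations.

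The main obstacle I anticipate is the bookkeeping at phase boundaries and handling the tail where $x_i$ is a small constant; once the per-phase overhead is shown to be $\Oh(1)$ for $k \ge 2$ and the tail is absorbed into the $\Oh(1)$ term, the final summation is immediate.
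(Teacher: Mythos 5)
Your phase-decomposition approach is genuinely different from the paper's, which instead iterates a single claim ("after roughly $\log x - 2\log\log x$ steps the value drops to $\log^2 x$") about $\Oh(\log^* x)$ times and telescopes the costs. Your route is in principle cleaner, but as written it contains an error that undermines the claimed bound.

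The starting phase $K$ satisfies $2^{2^K} < x \le 2^{2^{K+1}}$, i.e.\ $K = \Theta(\log\log x)$, \emph{not} $\Oh(\log^* x)$ as you assert. With that correction, your accounting -- ``$\Oh(1)$ overhead per phase, times the number of phases'' -- gives only $\log x + \Oh(\log\log x)$, which is weaker than the lemma's $\log x + \Oh(\log^* x)$. Fortunately your argument is salvageable because the $\Oh(1)$-per-phase bound is very loose: the actual overhead in phase $k$ is $\Oh(2^k \cdot 2^{-2^{k-1}})$, and $\sum_{k\ge 2} 2^k 2^{-2^{k-1}}$ is a doubly-exponentially decaying series that converges to $\Oh(1)$. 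So if you sum the true per-phase contributions rather than replacing each by a crude $\Oh(1)$, you obtain the stronger conclusion $\log x + \Oh(1)$. A clean way to carry this out and simultaneously avoid the ceiling/boundary bookkeeping you flag as a worry: write $N = \sum_{i<N} 1 = \sum_{i<N}\bigl(\log_2 x_i - \log_2 x_{i+1}\bigr) + \sum_{i<N}\bigl(1 - (\log_2 x_i - \log_2 x_{i+1})\bigr)$; the first sum telescopes to at most $\log_2 x$, and the second is $\sum_k N_k \cdot \Oh(2^{-2^{k-1}})$ where $N_k \le 2^{k+2}$ (since each step in phase $k\ge 2$ drops $\log_2 x$ by at least $1/2$ and the reachable log-range from within phase $k$ is at most $2^{k+1}$), which sums to $\Oh(1)$; the finitely many steps with $x_i \le 16$ contribute $\Oh(1)$ because the sequence strictly decreases by integers while $x_i > 4$.

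So: concrete bug in the phase count, and a reliance on a per-phase bound that (taken literally) is too weak; but the underlying decomposition is sound and, once the geometric decay of the overheads is used, actually yields $\log x + \Oh(1)$, strictly better than both the lemma statement and the paper's proof.
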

\begin{proof}
We claim that after at most $t=\log x - 2\log \log x+8$ steps of the process we have that $x_{t} \leq \log^{2} x$. Assume otherwise, that is, after $t$ steps we still have $x_{t} > \log^{2} x$. But then, for each $i=0,1,\ldots,t-1$, $x_{i+1} = \lfloor x_{i}(1/2+1/\sqrt{x_{i}})\rfloor \leq x_{i}(1/2+1/\log x)$, and therefore

\[x_{t} \leq x/2^{t} (1+2/\log x)^{t} \leq x/2^{t} e^{2t/\log x}\]
Substituting $t$ we obtain
\[x_{t} \leq x/2^{\log x-2\log\log x+8} e^{2+(16-4\log\log x)/\log x} \leq \log^{2}x \cdot e^{2+12/\log x} / 2^8 \leq \log^{2} x\]
where the last inequality holds for $x\geq11$, which leads to contradiction.
For sufficiently large $x\geq c$ we have $\log^{2}(\log^{2}x) \leq \log x$. Thus, by repeating the above reasoning $\Oh(\log^{*}x)$ times we obtain that after $\log x+\Oh(\log^{*}x)$ steps the value of $x$ has decreased to at most $\max\{11,c\}$. Then, using the fact that $x_{i+1} < x_{i}$ as long as $x_{i} > 4$, we conclude that after additional $\Oh(1)$ iterations the value of $x$ decreases to at most $4$, and so the process terminates. 
\end{proof}

\begin{lemma}\label{lm:main_hash_function}
 \assumptions, one can construct deterministically in $\Oh(|U| \log z \log k)$ time a function $f: U \rightarrow [k \cdot 2^{\Oh(\log^* k)}]$ such that for each $c \in [k \cdot 2^{\Oh(\log^* k)}]$ and for each $S_i$, the intersection of $\{u \in U \,|\, f(u) = c\}$ and $S_i$ contains $\Oh(\log z)$ elements.
\end{lemma}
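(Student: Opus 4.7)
The plan is to iteratively refine a partition of $U$ using Theorem~\ref{th:discrepancy} as the splitting subroutine. I would start from the trivial partition $\mathcal{P}_0 = \{U\}$ and, at iteration $i$, for each current part $P \in \mathcal{P}_i$ apply Theorem~\ref{th:discrepancy} to the restricted family $\{S_j \cap P : j \in [z]\}$ on the universe $P$. The resulting coloring $\chi_P : P \to \{-1,+1\}$ yields $P^+ = \chi_P^{-1}(+1)$ and $P^- = \chi_P^{-1}(-1)$, and these together form the refined partition $\mathcal{P}_{i+1}$. When the iteration halts, I define $f(u)$ to be the index of the unique part of the final partition containing $u$.

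To bound the number of iterations, let $x_i$ be the maximum of $|S_j \cap P|$ over all $P \in \mathcal{P}_i$ and all $j \in [z]$. Theorem~\ref{th:discrepancy} guarantees $\bigl||S_j \cap P^+| - |S_j \cap P^-|\bigr| \le \alpha \sqrt{x_i \log z}$, so each side has at most $x_i/2 + (\alpha/2)\sqrt{x_i \log z}$ elements of any $S_j$. This yields the recurrence $x_{i+1} \le x_i/2 + c\sqrt{x_i \log z}$ for a constant $c$. Substituting $y_i := x_i/(4c^2 \log z)$ turns it into $y_{i+1} \le y_i/2 + \sqrt{y_i}/2$, which is dominated term-wise by the recurrence $y_{i+1} = \lfloor y_i(1/2 + 1/\sqrt{y_i}) \rfloor$ of Lemma~\ref{lem:iterations}. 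Hence after $t = \log y_0 + O(\log^* y_0) = \log(k/\log z) + O(\log^* k)$ iterations we reach $x_t = O(\log z)$, and the final partition has at most $2^t = (k/\log z)\cdot 2^{O(\log^* k)} \le k \cdot 2^{O(\log^* k)}$ parts, as required. (If $k \le \log z$ no iterations are needed and $f$ can be taken constant.)

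For the running time, I would use the fact that the tree-based implementation of Theorem~\ref{th:discrepancy} (Proposition~\ref{prop:add_multiply}) performs $O\bigl(\sum_j |S_j| \cdot \log z\bigr)$ arithmetic operations, i.e.\ is linear in the total weight of the input family up to a $\log z$ factor. Summing over the parts of $\mathcal{P}_i$, the total work at iteration $i$ is $O(\log z) \cdot \sum_{P \in \mathcal{P}_i} \sum_j |S_j \cap P| = O(\log z) \cdot \sum_j |S_j| = O(|U| \log z)$, since each element of each $S_j$ lies in exactly one part at every level. Across the $O(\log k)$ iterations this telescopes to $O(|U| \log z \log k)$.

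The main difficulty I expect is twofold. First, the rescaling has to be done carefully: because each discrepancy step contributes an error of $\sqrt{x_i \log z}$, we cannot drive $x_i$ below $\Theta(\log z)$, which is exactly the reason the lemma only promises $O(\log z)$-size intersections rather than constant ones, and a later polynomial-ring hashing step (not part of this lemma) is needed to remove this penalty for the superimposed code application. Second, Theorem~\ref{th:discrepancy} is stated with the normalization $|U| = zk$; when invoked on a part $P$ the parameters $|P|$, $z_P$, $k_P$ need not satisfy this identity, so I would either pad $P$ with dummy elements belonging to no set (which neither influences the coloring of genuine elements nor changes intersection sizes after splitting) or observe that the proof of Theorem~\ref{th:discrepancy} only uses $|U| \ge z$ and does not truly need equality, and then verify that padding does not blow up the $O(m_P \log z_P)$ per-part cost, preserving the global $O(|U| \log z \log k)$ bound.
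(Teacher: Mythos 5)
Your proposal follows essentially the same route as the paper's proof: apply Theorem~\ref{th:discrepancy} recursively to halve each part, bound the per-part intersections via the discrepancy guarantee, rescale by a $\Theta(\log z)$ factor to match the recurrence of Lemma~\ref{lem:iterations} (giving $\log k + \Oh(\log^* k)$ levels and hence $k\cdot 2^{\Oh(\log^* k)}$ parts), and charge the running time per level as $\Oh(|U|\log z)$ by summing the tree-update cost $\Oh(\sum_j |S_j\cap P|\log z)$ over parts. The one point you add that the paper leaves implicit is the observation that Theorem~\ref{th:discrepancy} is parameterized by $|U|=zk$, which fails to hold on sub-parts; your fix (padding with dummy elements, or noting that the proof never actually uses the equality and only needs the $\Oh(\sum_j|S_j|\log z + z)$ operation count) is correct and a reasonable thing to flag, though in practice the per-part tree-initialization cost of $\Oh(z)$ at level $j$ sums over the $2^j$ parts to $\Oh(|U|)$ total and is safely absorbed.
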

\begin{proof}
We can reformulate the statement of the lemma as follows. We must show that there is a partitioning of $U$ into subsets $X_c = \{u \in U: f(u) = c\}$ such that for every $S_i$, the intersection $X_c \cap S_{i}$ has size at most $\Oh(\log z)$.

We partition $U$ recursively using the procedure from Theorem~\ref{th:discrepancy}. We start with a single set $X = U$. Suppose that after several steps we have a partitioning of $U$ into sets $X_c$ such that $|S_{i} \cap X_c| \le y$ for all $i$ and $c$ and some integer $y$. We then apply Theorem~\ref{th:discrepancy} to the sets $X_c$. Using the colouring output by the lemma, we partition each set $X_c$ into sets $X_{c_0}$ and $X_{c_1}$, where the former contains all the elements of $X_c$ of colour $-1$ and the latter all the elements of $X_c$ of colour $+1$. For $j\in\{0,1\}$ we choose $c_j$ (and also the value of $f(x)$ for $x\in X_{c_j}$) so that its binary representation equals the binary representation of $c$ appended with $j$. By Theorem~\ref{th:discrepancy}, there is a constant $\alpha$ such that $|S_i \cap X_{c_0}|, |S_{i} \cap X_{c_1}| \leq y/2+\frac12 \alpha \cdot \sqrt{y \log (3z)} \leq y(1/2+1/\sqrt{y/\alpha^2\log (3z)})$. We continue this process until $|S_{i} \cap X_c| \le 4 \alpha^2 \log (3z)$ for all $i$ and $c$. 

It remains to bound the number of iterations. By setting $x = k / \alpha^2 \log (3z)$ in Lemma~\ref{lem:iterations}, we obtain that we need at most $\log x + \Oh(\log^* x) \le \log k + \Oh(1) + \Oh(\log^{*} k) = \log k + \Oh(\log^* k) = t$ recursive applications of the partition procedure implemented with Theorem~\ref{th:discrepancy} to ensure that every set $S_i$ has at most $4 \alpha^2 \log (3z) = \Oh(\log z)$ elements in common with every $X_{c}$. Therefore, the size of the image of $f$ is bounded by $2^t = k 2^{\Oh(\log^{*} k)}$. The overall construction time is $\Oh(|U| \log z \log k)$.
\end{proof}

\subsection{Superimposed Codes}
We are now ready to show an efficient construction algorithm for data-dependent superimposed codes (see Definition~\ref{def:codes}). At a high level, we will construct a family of functions which, combined with the partition~$f$ from Lemma~\ref{lm:main_hash_function}, will give us the superimposed code.

\newcommand{\SIwpar}[1]{\eps^{-1}\log^2 #1}
\newcommand{\SIellshortpar}[2]{\epsdep #1\log^5 #2}
\newcommand{\ttime}[1]{\Oh(\eps^{-1} #1\log^2 #1)}
\newcommand{\famH}{\mathcal{H}(U,d)}
\newcommand{\numb}{\textsc{num}}
\newcommand{\pol}{\textsc{pol}}
\newcommand{\Deg}{\texttt{deg}}
\newcommand{\Mod}{\texttt{Mod}}
\newcommand{\ZZ}{\mathds{Z}}

\begin{theorem}\label{th:superimposed_codes}
\assumptions, one can construct an $(\{S_i\}, (1-\eps) w)$-superimposed code of weight $w = \Oh(\eps^{-1}\log^2|U|)$ and $\ell  = \Oh(\SIellshortpar{k}{|U|})$ in $\ttime{|U|}$ time and space. 
\end{theorem}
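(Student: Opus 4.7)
The plan is to combine the partition $f : U \to [k']$ from Lemma~\ref{lm:main_hash_function}, with $k' = k \cdot 2^{\Oh(\log^* k)}$, with a family of polynomial evaluation hashes. Since that lemma guarantees $|S_i \cap f^{-1}(c)| = \Oh(\log z)$ for every $i$ and $c$, the task of distinguishing $u$ from the elements of $S_i$ reduces to distinguishing $u$ from only $\Oh(\log z)$ ``rivals'' sharing its color.

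\emph{Construction.} I would first invoke Lemma~\ref{lm:main_hash_function} to obtain $f$. Next I fix a prime $q = \Theta(w)$, where $w$ is the eventual code weight, and a degree $d = \Theta(\log|U| / \log q)$ satisfying $q^{d} \geq |U|$. Each $u \in U$ is encoded as a distinct polynomial $P_u \in \mathbb{F}_q[x]$ of degree less than $d$, for instance by reading the base-$q$ digits of $u$ as coefficients. Fixing $w$ distinct points $\alpha_1,\dots,\alpha_w \in \mathbb{F}_q$ (possible since $q \geq w$), I set $h_j(u) := P_u(\alpha_j)$ and define
\[ C_u \;=\; \bigl\{\bigl(j,\,f(u),\,h_j(u)\bigr) : j \in [w]\bigr\} \;\subseteq\; [w]\times[k']\times\mathbb{F}_q,\]
so $|C_u|=w$ and the code lives in a universe of size $\ell = w \cdot k' \cdot q$.

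\emph{Correctness.} A position $(j, f(u), h_j(u))$ of $C_u$ can lie in $C_v$ only if $f(v) = f(u)$ and $P_v(\alpha_j) = P_u(\alpha_j)$. The first condition restricts $v \in S_i$ to $S_i \cap f^{-1}(f(u))$, of size $\Oh(\log z)$, and for any such $v \neq u$ the distinct polynomials $P_u, P_v$ of degree $<d$ coincide on at most $d$ points. Hence at most $\Oh(d\log z)$ positions of $C_u$ are covered by $\bigcup_{v\in S_i} C_v$, and choosing $w = \Theta(\eps^{-1} d \log z)$ guarantees $|C_u \setminus \bigcup_{v\in S_i} C_v| \geq (1-\eps)w$. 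Substituting $d = \Oh(\log|U|/\log\log|U|)$ and $\log z \leq \log|U|$ gives $w = \Oh(\eps^{-1}\log^2|U|)$; together with $q = \Theta(w)$ this yields $\ell = w \cdot k' \cdot q = \Oh(\eps^{-2} k \log^5|U|)$ after absorbing $2^{\Oh(\log^* k)}$ into the polylog.

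\emph{Efficiency and the main obstacle.} The partition step costs $\Oh(|U|\log z\log k)$ by Lemma~\ref{lm:main_hash_function}. The remainder of the construction outputs $|U|w = \Oh(\eps^{-1}|U|\log^2|U|)$ code positions, which matches the target running time. The main obstacle is computing each position in $\Oh(1)$ amortized time, because a straightforward Horner evaluation of $P_u$ at $\alpha_j$ costs $\Theta(d)$ per position and would add an extra $\log|U|/\log\log|U|$ factor. I would overcome this by splitting $P_u(x) = P_u^{\mathrm{lo}}(x) + x^{d/2} P_u^{\mathrm{hi}}(x)$ into halves of degree $<d/2$, and precomputing for each $\alpha_j$ a table of the values of all $q^{d/2} \leq |U|^{1/2}$ possible half-polynomials at $\alpha_j$. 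Each subsequent evaluation $P_u(\alpha_j)$ reduces to two table lookups and one multiplication in $\mathbb{F}_q$, and the precomputation cost $\Oh(w \cdot |U|^{1/2}\cdot d)$ is dominated by $\Oh(|U| w)$ for $|U|$ sufficiently large. This delivers the claimed $\Oh(\eps^{-1}|U|\log^2|U|)$ time and space bound.
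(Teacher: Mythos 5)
Your route is genuinely different from the paper's: where the paper encodes each $u$ as a polynomial over $\mathbb{Z}_2[x]$ and takes its remainder modulo all $\Theta(2^d/d)$ irreducible polynomials of degree $d$ (so the code coordinates are indexed by irreducible polynomials and collisions are bounded via unique factorization), you encode $u$ as a polynomial over $\mathbb{F}_q$ and use \emph{point evaluation} at $w$ fixed points, bounding collisions by the number of roots of $P_u-P_v$. Both are valid $\eps$-approximately-injective hash families, and your combination with Lemma~\ref{lm:main_hash_function} and the correctness bound $|C_u\setminus\bigcup_{v\in S_i}C_v|\ge (1-\eps)w$ via $w=\Theta(\eps^{-1}d\log z)$ is sound, and the resulting weight and length do hit the claimed $\Oh(\eps^{-1}\log^2|U|)$ and $\Oh(\eps^{-2}k\log^5|U|)$.

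The gap is in the time analysis, specifically the assertion ``$q^{d/2}\le |U|^{1/2}$''. This inequality runs the wrong way: by construction $q^d\ge|U|$, so $q^{d/2}\ge|U|^{1/2}$, and since $d$ is an integer the number of half-polynomials is $q^{\lceil d/2\rceil}$, which can be as large as $\Theta(q\sqrt{|U|})$ (using $q^{d-1}<|U|$, i.e.\ $q^d<q|U|$). Your precomputation therefore costs $\Oh(w\cdot q\sqrt{|U|}\cdot d)$ rather than $\Oh(w\sqrt{|U|}d)$, and with $q=\Theta(w)=\Oh(\eps^{-1}\log^2|U|)$ this exceeds the target $\Oh(\eps^{-1}|U|\log^2|U|)$ as soon as $\eps\lesssim\log^3|U|/\sqrt{|U|}$. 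The paper sidesteps this entirely: because it uses remainders, it can fill a table $\Mod_p[\cdot]$ of size exactly $\Oh(|U|)$ (one entry per polynomial of degree $\le t=\lfloor\log|U|\rfloor$) in $\Oh(|U|)$ time per irreducible $p$ via the DP recurrence $\Mod_p[q]=\Mod_p[q-p\cdot x^{\Deg(q)-\Deg(p)}]$, independently of $\eps$, and it separately falls back to the trivial code $\ell=|U|$, $w=1$ when $\eps<\log^2|U|/|U|$. To repair your variant you should either handle small $\eps$ with the trivial code (and check the cutoffs actually meet for all $k$), or replace the half-split table with an incremental evaluation: identify $U$ with $[|U|]$, precompute $\alpha_j^i$ for all $j,i$, iterate over $u=1,\dots,|U|$ in order, and update each $P_u(\alpha_j)$ under a base-$q$ increment of $u$, which touches $\Oh(1)$ digits amortized and hence costs $\Oh(1)$ amortized per $(u,j)$ pair, giving the desired $\Oh(|U|w)$ total with no $\eps$-dependent blowup.
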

\begin{proof}
By applying Lemma~\ref{lm:main_hash_function}, we obtain in $\Oh(|U| \log z \log k)=\Oh(|U|\log^2|U|)$ time a function $f: U\rightarrow [{k \cdot 2^{\Oh(\log^{*}k)}}]$ which gives a partitioning of $U$ into subsets $X_{c} = \{u \in U \,|\, f(u) = c\}$, such that for some constant~$\alpha$, for every $c$ and $i$ holds $|X_{c} \cap S_i| \leq \alpha \log z$.

Consider the ring of polynomials $\ZZ_2[x]$. Let $U = \{u_1, u_2, \ldots, u_{zk}\}$. We define a mapping $\pol : U \rightarrow \ZZ_2[x]$ as follows. Let $u = u_q$ and $q = \overline{q_t q_{t-1} \ldots q_0}$ be the binary representation of~$q$, where $t = \lfloor{\log |U|\rfloor}$, then $\pol(u) = \sum_{i = 0}^{t} q_i x^i$. 

Let $\famH$ be the family of functions $h_p: U\rightarrow \mathbb{F}_{2^d}$ of the form $h_p(u)= (\pol(u) \bmod p)$ for all irreducible polynomials $p$ of degree $d$. By Gauss's formula \cite{ChebouluSM10,Gauss1889}, there are $\Theta(2^d/d)$ irreducible polynomials of degree $d$ over  $\ZZ_2$, and so is the size of the family $\famH$.
Consider two distinct polynomials $x,y$ of degree $t$.
Observe that there are at most $t / d$ irreducible polynomials $p$ that hash both $x$ and $y$ to the same value $h_p(x)=h_p(y)$, because $\ZZ_2[x]$ is a unique factorization domain \cite{Gauss1889}. We choose $d$ in such a way that the probability that $x,y$ are hashed to the same value while choosing a hash function uniformly at random from $\famH$ is bounded by $\eps/(\alpha \log z)$: $\frac{t/d}{\Theta(2^d/d)} \leq \frac {\eps}{\alpha\log z}$ and hence we can choose $d = \Theta(\log \frac{t\log z}{\eps})$.

If $d > t$, then $\eps<\frac{\log^2|U|}{|U|}$ and we can take $\ell=|U|, w=1$ and set $C_{u_q}=\{q\}$. From now on, assume $d\leq t$. Let $f$ be as in Lemma~\ref{lm:main_hash_function}. 
Consider $u\in U$ such that $u \in X_c$, where $c=f(u)\in[k\cdot 2^{\Oh(\log^*k)}]$. We define $C_u$ as follows:
\[C_u = \{H_p(u) = \numb(h_p(u)) + 2^d\cdot \numb(p) + 4^d\cdot c \; | \; h_p\in\famH\},\]
where the mapping $\numb(q)$ treats a polynomial $q = \sum_{i = 0}^{d-1} q_i x^i$ as a $d$-bit number $\overline{q_{d-1} \ldots q_0}$. Clearly, $w=|C_u|=\Oh(2^d/d)=\Oh(2^d)=\Oh(\frac{t\log z}{\eps})=\Oh(\eps^{-1}\log^2|U|)$ and $C_u\subseteq [l]$ where:
\[\ell = 2^d \cdot 2^d \cdot k 2^{\Oh(\log^{*} k)} = \frac{t^2\log^2z}{\eps^2}\cdot k \cdot 2^{\Oh(\log^{*} k)} =  \Oh(\eps^{-2} k\log^5 |U|).\]

We claim that the obtained code is a $(\{S_i\},(1-\eps) w)$-superimposed code. Consider any $S_i$ and $u \notin S_i$. We need to count elements of $C_u$ that do not belong to any $C_v$, for $v \in S_i$. Let $c=f(u) \in [k\cdot 2^{\Oh(\log^{*} k)}]$ and so $u \in X_{c}$. By construction, $|X_{c} \cap S_i| \leq \alpha \log z$. Thus, by the union bound, the probability that $h_p(u)=h_p(x)$ for some $x \in X_{c} \cap S_i$ is at most~$\eps$ for $h_p$ chosen uniformly at random from $\famH$. Recall that $C_u$ consists of elements $H_p(u)=\numb(h_p(u)) + 2^d\cdot \numb(p) + 4^d\cdot c$ for $h_p \in \famH$. The number of irreducible polynomials $p$ such that $H_p(u)=H_p(x)$ for some $x \in X_{c} \cap S_i$
is at most $\eps \cdot w$. Consequently, at least $w-\eps \cdot w=(1-\eps) w$
elements of~$C_u$ do not belong to any $C_v$, for $v \in S_i$.

We now show that we can construct the above superimposed codes in $\Oh(|U|w)$ time.
To this end, we need to generate all irreducible polynomials of degree $d$ and to explain how we compute remainders modulo these polynomials. Note first that as we only operate on polynomials of degree $\leq t = \Oh(\log |U|)$, they fit in a machine word and hence we can subtract two polynomials or multiply a polynomial by any power of $x$ in constant time. We can now use this to generate the irreducible polynomials and compute the sets $C_u$ at the same time. We maintain a bit vector $I$ that for each polynomial $p$ of degree $\le d$ stores an indicator bit equal to $1$ iff $p$, i.e. iff its remainder modulo any polynomial of degree smaller than $\deg(p)$ is not zero. We consider the polynomials of degree $0, 1, 2, \ldots, d$ in order. For every irreducible polynomial $p$, we compute a table $\Mod_p[q] = (q \bmod p)$ for all polynomials $q$ of degree $\leq t$ in overall $\Oh(|U|)$ time using dynamic programming with the following recursive formula:
\[
    \Mod_p[q]=
    \begin{cases}
      q, & \text{if } \Deg(q) < \Deg(p)\\
      \Mod_p[q - p\cdot x^{\Deg(q) -\Deg(p)}], & \text{otherwise}
    \end{cases}
\]
We use the table to compute $H_p(u)$ for all $u\in U$. Also, if for a polynomial $q$ the remainder is zero, we zero out the corresponding bit in $I$. Here we use the fact that $d \le t$ to guarantee that we will find all irreducible polynomials of degree $\le d$ in this way.

As there are $w$ irreducible polynomials, in total we spend $\Oh(|U|w)=\Oh(\eps^{-1}|U|\log^2|U|)$ time. At any moment, we use $\Oh(|U|)$ space to store the table and $\Oh(\eps^{-1} |U| \log^2|U|)$ space to store the codes. 
\end{proof}

\section{Upper Bounds for Generalised Pattern Matching}
\label{sec:algorithms}
In this section, we present new algorithms for the parameters $\D$, $\S$, and $\I$. Our algorithms for the parameters $\D$ and $\S$ share similar ideas, so we present them together in Section~\ref{sec:d_and_s}. The algorithm for $\I$ is presented in Section~\ref{sec:ranges}.

We start by recalling the formal statement of the \DontCares problem that will be used throughout this section.

{\defproblem{\textsc{\DontCares} (counting, binary alphabet)}{A text $T\in \{0, 1, ?\}^n$ and a pattern $P\in \{0, 1, ?\}^m$, where ``?'' is a \emph{don't care character} that matches any character of the alphabet.}{For each $i \in [n-m+1]$, the number of positions $j \in [m]$ such that $T[i+j-1]$ does not match $P[j]$.}

\noindent Clifford and Clifford~\cite{CLIFFORD200753} showed that this problem can be solved in $\Oh(n\log m)$ time.

\subsection{Parameters \texorpdfstring{$\D$}{D} and \texorpdfstring{$\S$}{S}}
\label{sec:d_and_s}
We first show Monte Carlo algorithms for the reporting and counting variants of \GPM, and then de-randomise them using the data-dependent superimposed codes of Section~\ref{sec:superimposed_codes}.

\subsubsection{Randomised Algorithms}
We start by presenting a new reporting algorithm for the parameter $\D$. It does not improve over the algorithm of~\cite{Muthukrishnan95}, but encapsulates a novel idea that will be used by all our algorithms for the parameters $\D$ and $\S$. Essentially, we use hashing to reduce $\Sigma_T$ to a smaller set of characters of size $p=\Theta(\D)$ while preserving occurrences of the pattern in the text with constant probability, and then show that this smaller instance of \GPM can be reduced to $p=\Theta(\D)$ instances of \DontCares.

\begin{theorem}\label{th:UB_d_rand}
Let $\D$ be the maximum degree in the matching graph $M$ and $c$ be any constant fixed in advance. There is a Monte Carlo algorithm that solves the reporting variant of \GPM in $\Oh(\D n \log m \log n)$ time. The error is one-sided (only false positives are allowed), and the error probability is at most $1/n^c$.
\end{theorem}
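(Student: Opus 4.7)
The plan is to reduce \GPM to many instances of the binary \DontCares problem via a random hash that shrinks $\Sigma_T$ down to $\Theta(\D)$ symbols. Fix a prime $p$ with $2\D \le p < 4\D$ (obtainable via Bertrand's postulate and a sieve in $\Oh(\D \operatorname{polylog} \D)$ time) and draw a pairwise-independent hash $h : \Sigma_T \to [p]$. For $a \in \Sigma_P$ let $N(a) \subseteq \Sigma_T$ be its set of matching text characters; the hashed matching relation says $c \in [p]$ matches $a$ iff $c \in h(N(a))$. Using oracle~(3) on every text character that occurs in $T$ (at most $n$ distinct ones), each pair $(h(u),a)$ with $u \in N(a)$ can be enumerated in $\Oh(\D n)$ time, which is enough to represent the hashed relation as a bit table indexed by $[p]\times \Sigma_P$ restricted to the characters of $P$.

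For each colour $c \in [p]$ the plan is to construct a \DontCares instance: set $T_c[i] = \texttt{1}$ if $h(T[i])=c$ and $T_c[i]=\texttt{?}$ otherwise, and $P_c[j]=\texttt{0}$ if $c \notin h(N(P[j]))$ and $P_c[j]=\texttt{?}$ otherwise. A \DontCares mismatch at alignment $i$, position $j$, in the $c$-th instance occurs precisely when $h(T[i+j-1]) = c$ and $c$ does not match $P[j]$ in the hashed relation. Each pattern position $j$ contributes to at most one value of $c$, so summing the per-alignment mismatch counts returned by Clifford--Clifford~\cite{CLIFFORD200753} over the $p$ instances yields, for each $i$, the number of \emph{hashed} mismatches between $T[i,i+m-1]$ and $P$. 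Report $i$ as a candidate iff this sum is zero.

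For the error analysis, note that every true occurrence of $P$ in $T$ is preserved under hashing and therefore remains a candidate. Conversely, if alignment $i$ is not a true occurrence, there is some $j$ with $T[i+j-1] \notin N(P[j])$; by pairwise independence and the union bound, $\Pr[h(T[i+j-1]) \in h(N(P[j]))] \le |N(P[j])|/p \le \D/p \le 1/2$, so $i$ survives the hashed test with probability $\le 1/2$. Running the whole procedure $k = \Theta(c \log n)$ times with independent hash functions and reporting only alignments that survive \emph{every} trial gives false-positive probability $\le 2^{-k} \le 1/n^{c+1}$, and a union bound over the at most $n$ alignments yields the claimed $1/n^c$ error, with no false negatives.

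The time per trial is dominated by the $p$ \DontCares subproblems, each solved in $\Oh(n\log m)$, for $\Oh(\D n \log m)$ total; preprocessing the hashed relation and building the $T_c$'s and $P_c$'s is $\Oh(\D(n+m))$ and is absorbed into this bound. Over $\Oh(\log n)$ trials the total running time is $\Oh(\D n \log m \log n)$, matching the claim. The main subtlety to pin down rigorously is the per-position false-positive estimate under a genuinely efficient pairwise-independent family: one needs $h$ to be evaluable in constant time on a universe of size $|\Sigma_T| = \poly(n)$ while still giving the collision probability $\le 1/p$, which is standard but must be stated carefully so that the $\D/p$ bound, and hence the $\log n$-fold amplification, goes through cleanly.
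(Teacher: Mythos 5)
Your argument is essentially the paper's: hash $\Sigma_T$ down to $\Theta(\D)$ colours with a pairwise-independent function, turn each colour into a \DontCares instance solved by Clifford--Clifford, observe that a genuine mismatch survives the hashing with probability at least $1/2$, and amplify with $\Oh(\log n)$ independent repetitions intersected together. The only differences are cosmetic (you pick a prime codomain $p\in[2\D,4\D)$ while the paper composes a mod-$p$ map with a mod-$2\D$ reduction, and you build the hashed relation by iterating over the text characters occurring in $T$ rather than over pattern positions), and the precision issue you flag about achieving collision probability $\le 1/p$ after reducing from a large universe is a real but standard point that the paper also glosses over; it only costs a constant factor in $p$.
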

\begin{proof}
If $\D > m$, we can use a naive algorithm that compares the pattern and each $m$-length substring of the text character-by-character and uses $\Oh(mn) = \Oh(\D n)$ time in total. Below we assume $\D \le m$. We can also assume $|\Sigma_T| \le n$.

We first choose a 2-wise independent hash function $h: \Sigma_T \rightarrow [2\D]$ of the form $h(x) = ((a \cdot x + b) \mod p) \mod (2\D)+1$, where $p \ge |\Sigma_T|$ is a prime, and $a, b$ are chosen independently and uniformly from $\mathbb{F}_p$. Note that we can find a prime $p$ such that $n \le p \le 2n$, in $\Oh(n)$ time. Consider a matching graph $M'$ on the set of vertices $[p] \cup \Sigma_P$. For every character $b = P[j]$ and for every character $a \in \Sigma_T$ in the adjacency list of $b$, we add an edge $(h(a), b)$ to $M'$. Overall, it takes $\Oh(\D m) = \Oh(\D n)$ time.

We claim that if $M$ does not contain an edge $(a, b)$, then the probability of $M'$ to contain an edge $(h(a),b)$ is at most $1/2$. By definition, if $(h(a),b)$ belongs to $M'$, then there exists a character $a' \in\Sigma_{T}$ such that $(a' , b)$ is in $M$ and $h(a') = h(a)$. Since $h$ is 2-wise independent, for a fixed character $a' $ the probability of $h(a') = h(a)$ is $1/(2\D)$. Because the degree of $b$ is at most~$\D$, the probability of such event is at most $1/2$ by the union bound. 

Consider a text $T'$, where $T'[i] = h(T[i])$. If $T[i, i+m-1]$ does not match $P$ under $M$, then $T'[i,i+m-1]$ does not match $P$ under $M'$ with probability $\ge 1/2$. Indeed, suppose that for some $j \in [m]$, $T[i+j-1]$ and $P[j]$ do not match under $M$, or equivalently, an edge $(T[i+j-1], P[j])$ does not belong to $M$. From above, with probability at least $1/2$, $h(T[i+j-1])$ and $P[j]$ do not match under $M'$. It follows that we can use the \GPM algorithm for $M'$, $T'$, and $P$ to eliminate every non-occurrence of $P$ in~$T$ with probability at least $1/2$. We can amplify the probability by independently repeating the algorithm $c\log n$ times.

It remains to explain how to solve \GPM for $M'$, $T'$, and~$P$. We use the fact that the size of the alphabet of $T'$ is $\Oh(\D)$. For every $a\in [2\D]$ we create a new text $T'_{a}[1, n]$ and a new pattern $P_{a}[1, m]$ as follows:

\begin{tabular}{c @{\hspace{2cm}} c}
$T'_{a}[j]=
\begin{cases}
\texttt{0} & \mbox{ if } T'[j] = a,\\
\texttt{?} & \mbox{ otherwise.}
\end{cases}
$
&
$P_{a}[j]=
\begin{cases}
\texttt{0} & \mbox{ if } a \mbox{ matches } P[j] \mbox{ under } M',\\
\texttt{1} & \mbox{ otherwise}.
\end{cases}
$
\\
\end{tabular}

We can construct $T'_a$ and $P_a$ in $\Oh(n+m) = \Oh(n)$ time, or in $\Oh(\D n)$ total time for all $a \in [2\D]$. It is not hard to see that $T'[i,i+m-1]$ matches $P$ if and only if $T'_a[i,i+m-1]$ matches~$P_a$ for all $a \in [2\D]$. Therefore, to solve \GPM for $M'$, $T'$, and $P$, it suffices to solve the $2\D$ instances of \DontCares. By~\cite{CLIFFORD200753}, this can be done in total $\Oh(\D n \log m)$ time. As we repeat the algorithm $c\log n$ times, the theorem follows.
\end{proof}

We now show a new randomised algorithm for the parameter $\S$. At a high level, we divide~$\Sigma_P$ into heavy and light characters based on their degree in $M$ (a character of $\Sigma_{P}$ is called heavy when it matches many characters of $\Sigma_{T}$, and light otherwise). The number of heavy characters is relatively small, and we can eliminate all substrings of $T$ that do not match $P$ because of heavy characters by running an instance of \DontCares for each of them. For light characters, we apply Theorem~\ref{th:UB_d_rand}.

\begin{theorem}\label{th:UB_s_rand}
Let $\S$ be the number of edges in the matching graph $M$ and $c$ be any constant fixed in advance. There is a Monte Carlo algorithm that solves the reporting variant of \GPM in $\Oh(\sqrt{\S} n\log m\sqrt{\log n})$ time. The error is one-sided (only false positive are allowed), and the error probability is at most $1/n^c$. 
\end{theorem}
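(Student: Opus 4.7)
The plan is to combine Theorem~\ref{th:UB_d_rand} with a heavy/light partition of $\Sigma_P$. Fix a threshold $\tau$ to be chosen later; call $b \in \Sigma_P$ \emph{heavy} if its degree in $M$ exceeds $\tau$ and \emph{light} otherwise. Because the degrees in $M$ sum to $\S$, there are at most $\S/\tau$ heavy characters. For each heavy $b$, all constraints imposed on a window $T[i..i+m-1]$ by positions of $P$ equal to $b$ can be captured by a single \DontCares instance: take $P_b[j] = 0$ if $P[j] = b$ and $P_b[j] = {?}$ otherwise, and $T_b[i] = 0$ if $T[i]$ matches $b$ in $M$ and $T_b[i] = 1$ otherwise. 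Then $T[i..i+m-1]$ satisfies all $b$-constraints iff $T_b[i..i+m-1]$ matches $P_b$ with zero mismatches, which is decided in $\Oh(n \log m)$ time by Clifford--Clifford~\cite{CLIFFORD200753}. Intersecting the outputs across all heavy characters costs $\Oh((\S/\tau)\, n \log m)$ and is deterministic.

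For the light characters I would apply Theorem~\ref{th:UB_d_rand} to the instance obtained from $(T, P, M)$ by replacing every heavy position of $P$ with a don't-care symbol (declared to match all of $\Sigma_T$) and restricting $M$ to the light characters. The maximum degree in the reduced matching graph is at most $\tau$. Inspecting the proof of Theorem~\ref{th:UB_d_rand}, the hash $h: \Sigma_T \to [2\tau]$ is only ever compared against edges of the light subgraph, and in each binary instance $(T'_a, P'_a)$ one sets $P'_a[j] = {?}$ at the don't-care positions of $P$ so they never contribute a mismatch; the one-sided $1/n^c$ error guarantee is thereby preserved, and the running time is $\Oh(\tau n \log m \log n)$.

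The output is the intersection of the two candidate sets. Balancing $(\S/\tau)\, n \log m$ against $\tau n \log m \log n$ by taking $\tau = \sqrt{\S/\log n}$ yields $\Oh(\sqrt{\S \log n}\, n \log m) = \Oh(\sqrt{\S}\, n \log m \sqrt{\log n})$. The main technical point to verify is that Theorem~\ref{th:UB_d_rand} extends cleanly to patterns with don't-care positions: this is essentially syntactic, amounting to inserting ${?}$ into $P'_a$ at the appropriate positions and checking that the 2-wise independence argument never involves an edge incident to a don't-care position of $P$. Once that is confirmed, only balancing arithmetic remains.
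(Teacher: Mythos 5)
Your proposal is essentially the paper's own proof: the same heavy/light partition of $\Sigma_P$ at threshold $\sqrt{\S/\log n}$, one \DontCares instance per heavy character (your $0/1$ encoding is a trivially equivalent variant of the paper's), and an application of Theorem~\ref{th:UB_d_rand} to the light subinstance after replacing heavy pattern positions with don't-cares. The balancing arithmetic, error analysis, and time bound are identical; the only omission is the initial $\sqrt{\S} > m$ fallback to the naive algorithm, which is a minor bookkeeping step.
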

\begin{proof}
If $\sqrt{\S} > m$, we can use a naive algorithm that compares each $m$-length substring of the text and the pattern character-by-character and uses $\Oh(mn) = \Oh(\sqrt{\S} n)$ time in total. Below we assume $\sqrt{\S} \le m$. 

We call a character $b\in \Sigma_{P}$ \emph{heavy} when it matches at least $\sqrt{\S/\log n}$ characters $a\in \Sigma_{T}$, and otherwise we call $b$ \emph{light}. Observe that the number of heavy characters is at most $\sqrt{\S\log n}$. 
For every heavy character $b\in \Sigma_{P}$ we solve a separate instance of \DontCares to rule out all substrings of the text that do not match the pattern due to $b$. The instance is formed by creating a text $T_{b}[1,n]$ and a pattern $P_b[1,m]$:

\begin{tabular}{c @{\hspace{2cm}} c}
$T_{b}[j]=
\begin{cases}
\texttt{0} & \mbox{ if } T[j] \mbox{ and } b \mbox{ do not match under } M,\\
\texttt{?} & \mbox{ otherwise.}
\end{cases}
$
&
$
P_{b}[j]=
\begin{cases}
\texttt{1} & \mbox{ if } P[j]=b,\\
\texttt{?} & \mbox{ otherwise.}
\end{cases}
$
\end{tabular}

\noindent
By~\cite{CLIFFORD200753}, we can solve all the instances deterministically in $\Oh(\sqrt{\S} n \log m\sqrt{\log n})$ time. Clearly, $T[i, i+m-1]$ and $P$ do not match because of a heavy character $b$ if and only if $T_b[i, i+m-1]$ and $P_b$ do not match. 

It remains to rule out all substrings $T[i,i+m-1]$ that do not match due to a light character $P[j]$ aligned with non-matching character $T[i+j-1]$. To this end, first replace every heavy character $P[j]$ with the don't care character and then apply the technique of Theorem~\ref{th:UB_d_rand} with $\D = \sqrt{\S/\log n}$. This step takes $\Oh(\sqrt{\S} n \log m \sqrt{\log n})$ time and has one-sided error probability~$1/n^c$.    
\end{proof}

Combining the techniques of Theorems~\ref{th:UB_d_rand},~\ref{th:UB_s_rand} and the approach of Kopelowitz and Porat~\cite{KopelowitzP18}, we obtain the following corollary. 

\begin{corollary}\label{cor:UB_s_d_rand_counting}
Let $c$ be any constant fixed in advance, $\D$ be the maximum degree and $\S$ be the number of edges in the matching graph $M$. There is a $(1-\eps)$-approximation Monte Carlo algorithm that solves the counting variant of \GPM in $\Oh(\min\{\eps^{-1} \D \log n, \sqrt{\eps^{-1} \S \log n}\} \cdot n \log m)$ time. The error probability is at most~$1/n^c$.
\end{corollary}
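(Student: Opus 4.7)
My plan is to combine Theorems~\ref{th:UB_d_rand} and~\ref{th:UB_s_rand} with a Kopelowitz--Porat-style ``max of independent trials'' estimator. The crucial observation is that the reductions to \DontCares inside both theorems actually output exact mismatch counts, not merely yes/no reporting answers, so by summing the outputs of the $\Oh(\D)$ (respectively $\Oh(\sqrt{\S/\log n})$) \DontCares subinstances one obtains, at every alignment $i$, a quantity $Y_i$ equal to the Hamming distance between $h(T)[i\dd i+m-1]$ and $P$ with respect to the hashed matching $M'$. Since every edge of $M$ survives in $M'$ (matching is preserved under hashing), we always have $Y_i \le X_i$, where $X_i$ is the true mismatch count; the only loss is that a hash collision can turn a true mismatch into a spurious match. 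Thus $Y_i$ is a one-sided estimator of $X_i$, and amplification will be achieved by taking a maximum rather than an average.

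For the $\D$-bound I would rerun the proof of Theorem~\ref{th:UB_d_rand} but enlarge the hash range from $[2\D]$ to $[2\D/\eps]$. Using the $2$-wise independence of $h$ and a union bound over the $\le\D$ characters of $\Sigma_T$ matching a fixed $P[j]$, each true mismatch at position $j$ survives in $M'$ with probability at least $1-\eps/2$, so $E[Y_i]\ge(1-\eps/2)X_i$. Applying Markov's inequality to the nonnegative variable $X_i-Y_i$ gives $\Pr[Y_i\ge(1-\eps)X_i]\ge 1/2$ on each trial, so taking the maximum over $L=\Theta(c\log n)$ independent hash functions drives the per-position failure probability below $n^{-c-1}$. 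Each trial processes $\Oh(\D/\eps)$ \DontCares instances in $\Oh((\D/\eps)\,n\log m)$ time via~\cite{CLIFFORD200753}, for a total of $\Oh(\eps^{-1}\D n\log m\log n)$.

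For the $\S$-bound I would reuse the heavy/light split of Theorem~\ref{th:UB_s_rand} with threshold $\tau=\sqrt{\eps\S/\log n}$. The at most $\S/\tau=\sqrt{\S\log n/\eps}$ heavy characters of $\Sigma_P$ contribute exactly and deterministically through one \DontCares instance each (their counts are not subject to any hashing bias), in total time $\Oh(\sqrt{\S\log n/\eps}\cdot n\log m)$. For the light contribution I would replace every heavy position of $P$ by a don't-care and invoke the $\D$-procedure above with $\D=\tau$; its running time is $\Oh(\eps^{-1}\tau\cdot n\log m\log n)=\Oh(\sqrt{\S\log n/\eps}\cdot n\log m)$. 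Summing the exact heavy contribution and the $(1-\eps)$-approximate light contribution yields a $(1-\eps)$-approximation of $X_i$ at every alignment, and taking the better of the two procedures gives the claimed running time.

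The hard part will be the concentration argument. A naive approach based on averaging the per-trial estimators $Y_i^{(t)}$ would fail precisely when $X_i$ is small, because a single underperforming trial can already destroy the multiplicative guarantee in that regime. The plan side-steps this by exploiting the one-sided bound $Y_i\le X_i$: no trial can ever overshoot, so taking the maximum is a ``good event occurs in at least one trial'' argument, and a per-trial success probability of $1/2$ from the Markov step amplifies to $1-n^{-c-1}$ with $\Theta(\log n)$ repetitions. A final union bound over the $n$ alignments establishes the inverse-polynomial error probability asserted in the statement.
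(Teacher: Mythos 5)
Your proposal matches the paper's proof essentially step for step: both enlarge the hash range to $\Theta(\D/\eps)$ so that each true mismatch survives with probability $\ge 1-\eps/2$, both apply Markov's inequality to the one-sided deficit $X_i-Y_i$ to get per-trial success probability $1/2$, both amplify by taking the maximum over $\Theta(c\log n)$ independent hash functions (the Kopelowitz--Porat trick), and both handle the $\S$-bound via the same heavy/light split with threshold $\sqrt{\eps\S/\log n}$. The only cosmetic difference is that you target per-position failure probability $n^{-c-1}$ and then union-bound over $n$ alignments explicitly, which is a slightly more careful accounting than the paper's.
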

\begin{proof}
We first explain how to modify the algorithm of Theorem~\ref{th:UB_d_rand} to obtain a counting algorithm with complexity $\Oh(\eps^{-1}\D n \log n\log m)$. As before, we assume $\D /\eps \le m$, otherwise we can use the naive algorithm. We follow the approach of Kopelowitz and Porat~\cite{KopelowitzP18}. Namely, we repeat the following process $\log n$ times. Instead of the hash function used by the algorithm, we choose a $2$-wise independent hash function $h : \Sigma_{T} \rightarrow [q]$, where $2\D / \eps \le q \le 4\D / \eps$ is a prime, and proceed as in Theorem~\ref{th:UB_d_rand} to obtain a new matching graph $M'$ and a new text~$T'$. For each $a \in [q]$, we solve \DontCares for $T'_a$ and $P_a$, and as a result obtain the total number $h'[i]$ of mismatches between $T'_a[i,i+m-1]$ and $P_a$ for all $i\in [n-m+1]$. Denoting by $h[i]$ the true number of mismatches under $M$, by the reasoning from Theorem~\ref{th:UB_d_rand} we have that $\mathbb{E} [h'[i]] \geq h[i] (1-\eps/2)$. Finally, for every $i\in [n-m+1]$ we return the maximum value of $h'[i]$ obtained in all $c \log n$ iterations. By Markov's inequality, with probability at least $1-1/n^{c}$ this is at least $(1-\eps) h[i]$ (and clearly at most $h[i]$). As a result, we obtain an algorithm with time $\Oh(\eps^{-1} \D  n \log m \log n)$.

We now show how to modify the algorithm of Theorem~\ref{th:UB_s_rand}. For that, we define a character $b\in \Sigma_{P}$ \emph{heavy} if it matches at least $\sqrt{\eps \S/\log n}$ characters in $\Sigma_T$. We then note that the number of mismatches due to heavy letters can be computed exactly by solving $\sqrt{\eps^{-1} \S \log n}$ obtained instances of \DontCares, and for the light characters we use the method explained above.  We therefore obtain an algorithm with time $\Oh(\sqrt{\eps^{-1} \S \log n} \, n \log m)$, and the claim follows.
\end{proof}

\subsubsection{Deterministic Algorithms}
We are now ready to give $(1-\eps)$-approximation deterministic algorithms for the counting variant of \GPM for the parameters $\D$ and $\S$. By taking $\eps = 1/2$, the algorithms for the reporting variant follow immediately. We first remind the definition of superimposed codes, which we will use throughout this section.

\codes*

\begin{theorem}\label{th:UB_d_det}
Let $\D$ be the maximum degree in the matching graph $M$. There exists an ${(1-\eps)}$-approximation deterministic algorithm that solves the counting variant of \GPM in $\timeUBddet$ time.
\end{theorem}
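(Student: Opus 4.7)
The plan is to derandomise the counting algorithm of Corollary~\ref{cor:UB_s_d_rand_counting} by replacing its $\Oh(\log n)$ independent random hashes with the $\ell$ ``coordinates'' of the data-dependent superimposed code of Theorem~\ref{th:superimposed_codes}. First I set up the sets: for each $b\in\Sigma_P$ let $S_b\subseteq\Sigma_T$ be the set of text characters matching $b$, so $|S_b|\le\D$ and the number of sets is $z=|\Sigma_P|$. I drop from $\Sigma_T$ every character of degree $0$ (positions of $T$ carrying such a character are unconditional mismatches and contribute a known constant to every $h[i]$) and then pad back up to a universe of size $|U|=z\D$ with dummy elements that never appear in $T$; this is possible because every positive-degree character of $\Sigma_T$ lies in some $S_b$, so the number of positive-degree characters is at most $\sum_b |S_b|\le zk$. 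Since $|U|=\Oh(m\D)=\poly(n)$, applying Theorem~\ref{th:superimposed_codes} with $k=\D$ yields in $\Oh(\eps^{-1}|U|\log^2|U|)$ time an $(\{S_b\},(1-\eps)w)$-superimposed code $\{C_a\}_{a\in U}\subseteq[\ell]$ of weight $w=\Oh(\eps^{-1}\log^2 n)$ and length $\ell=\Oh(\eps^{-2}\D\log^5 n)$.

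Next I precompute $B_b=\bigcup_{v\in S_b} C_v$ for each $b\in\Sigma_P$ in $\Oh(m\D w)$ total time. By the defining property of the code, when $T[i+j-1]$ matches $P[j]$ we have $C_{T[i+j-1]}\subseteq B_{P[j]}$, while otherwise $|C_{T[i+j-1]}\setminus B_{P[j]}|\ge(1-\eps)w$. Summing over $j\in[m]$ therefore gives
\[
(1-\eps)\,w\cdot h[i] \;\le\; \sum_{j\in[m]} |C_{T[i+j-1]}\setminus B_{P[j]}| \;\le\; w\cdot h[i],
\]
where $h[i]$ is the exact mismatch count at alignment $i$, so $\tilde h[i]=\frac{1}{w}\sum_j|C_{T[i+j-1]}\setminus B_{P[j]}|$ is the desired $(1-\eps)$-approximation.

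To compute that middle sum quickly for every $i$, I swap the order of summation to obtain $\sum_{\ell'\in[\ell]}\bigl|\{j:\ell'\in C_{T[i+j-1]}\text{ and }\ell'\notin B_{P[j]}\}\bigr|$. For each fixed $\ell'$ the inner count is exactly the mismatch count returned by \DontCares on the strings $T'_{\ell'}[i]=\mathtt{0}$ if $\ell'\in C_{T[i]}$ and $\mathtt{?}$ otherwise, and $P'_{\ell'}[j]=\mathtt{1}$ if $\ell'\notin B_{P[j]}$ and $\mathtt{?}$ otherwise: a mismatch in this instance occurs precisely when both conjuncts hold. All $\ell$ pairs of strings are produced in $\Oh((n+m\D)w)$ time by iterating once over each $C_{T[i]}$ and once over each $C_v$ for $v\in S_{P[j]}$. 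Solving each instance by Clifford--Clifford~\cite{CLIFFORD200753} in $\Oh(n\log m)$ time costs $\Oh(\ell\cdot n\log m)=\Oh(\eps^{-2}\D n\log^6 n)$ overall, dominating code construction and all preprocessing.

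The main obstacle is not the pattern-matching step (essentially a swap of summation) but the parameter bookkeeping around Theorem~\ref{th:superimposed_codes}: the theorem demands $|U|=zk$ exactly, which forces the normalisation of $\Sigma_T$ above and a careful use of the polynomial-alphabet assumption so that $\log|U|$ collapses to $\log n$ and the code length carries the right five logarithmic factors. Once this is in place, the $(1-\eps)$ approximation follows verbatim from the code's defining inequality with no further slack.
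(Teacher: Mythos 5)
Your proposal is correct and essentially the same as the paper's argument: both build the $(\{S_b\},(1-\eps)w)$-superimposed code from Theorem~\ref{th:superimposed_codes} on the sets $S_b$ of text characters matching each pattern character, deduce the two-sided bound $(1-\eps)\,w\,h[i]\le\sum_j|C_{T[i+j-1]}\setminus B_{P[j]}|\le w\,h[i]$, and evaluate the middle sum with \DontCares; the paper concatenates the $\ell$-bit codes into one text of length $n\ell$ and one pattern of length $m\ell$ and makes a single Clifford--Clifford call, whereas you slice by coordinate and make $\ell$ calls on strings of length $n$ and $m$, but these are the same computation and give the same $\Oh(n\ell\log(m\ell))=\Oh(\eps^{-2}\D n\log^6 n)$ bound. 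One piece of bookkeeping you should tighten: you set $z=|\Sigma_P|$, which makes ``every positive-degree character of $\Sigma_T$ lies in some $S_b$'' literally true but lets $|U|=z\D$ grow to $|\Sigma_P|\D$, and then the code-construction cost $\Oh(\eps^{-1}|U|\log^2|U|)$ need not be dominated by the Clifford--Clifford step. Restrict to the (at most $m$) characters $b$ actually occurring in $P$ so that $z\le m$ and $|U|=\Oh(m\D)$; then a positive-degree character of $\Sigma_T$ that lies in no $S_b$ simply never matches any $P[j]$ and can be folded in with the degree-$0$ characters you already handle exactly, or, as the paper does, given the code of a sentinel $\$\notin\bigcup_b S_b$ so the superimposed-code guarantee charges it $\ge(1-\eps)w$ at every alignment. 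With that correction, and the explicit assumption $\eps\ge 1/m$ (otherwise fall back to the naive $\Oh(mn)$ comparison) so that $\log\ell=\Oh(\log n)$, your argument matches the paper's.
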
 
\begin{proof}
First, note that we can assume $\D \le m$ and $\eps \ge 1/m$. If this is not the case, we can run a naive algorithm that compares each $m$-length substring of the text $T$ and the pattern character-by-character in $\Oh(mn) = \Oh (\D n)$ time. 

For each distinct character $b$ of the pattern $P$, consider a set $S_b$ containing all characters in $\Sigma_T$ that match $b$. By definition, $|S_b| \le \D$. We define the universe $U = (\bigcup_{b\in\Sigma_P} S_b) \cup \{\$\}$, where $\$ \notin \Sigma_T$ is a special character that we will need later, $|U|=\Oh(n)$. We apply Theorem~\ref{th:superimposed_codes} that constructs $(\{S_b\}, (1-\eps) w)$-superimposed code for the universe $U$ and sets $S_b$ in $\ttime{n}$ time, where the weight $w = \Oh(\SIwpar{n})$ and the length $\ell  = \Oh(\SIellshortpar{\D}{n})$. 

We define the code of a character $a \in U$ to be a binary vector of length $\ell$ such that its $j$-th bit equals $1$ if $C_a$ contains $j$, and $0$ otherwise. For a character $a' \in \Sigma_T \setminus U$, we define its code to be equal to the code of $\$$. We define the code of a character $b \in \Sigma_P$ to be a binary vector of length $\ell$ such that its $j$-th bit equals $1$ if $\bigcup_{a \in S_b} C_a$ contains $j$, and $0$ otherwise. Next, we create a text $T'[1,n\ell]$ and a pattern $P'[1,m\ell]$ by replacing the characters in respectively $T$ and $P$ by their codes. To finish this step, we replace each $1$ in $P'$ with the don't care character and run the algorithm of  Clifford and Clifford~\cite{CLIFFORD200753} for $T'$ and $P'$ that takes $\Oh(n\ell \log (m \ell)) = \Oh(\epsdep\D n \log^6 n)$ time (here we use $\eps \ge 1/m$).

Let $h'$ be the number of mismatching characters between $P'$ and $T'[(i-1)\cdot \ell+1, (i+m-1) \cdot \ell]$, and $h$ be the number of mismatches between $P$ and $T[i, i+m-1]$. We claim that $(1-\eps) w h \le h' \le w h$. Indeed, if $P[j]$ matches $T[i+j-1]$, then $C_{T[i+j-1]}$ is a subset of $\bigcup_{a \in S_{P[j]}} C_a$. Therefore, if the code of $T[i+j-1]$ contains $1$ in position $k$, the code of $P[j]$ will have $1$ in position $k$ as well. By replacing all $1$s in $P'$ with the don't care characters, we ensure that the corresponding fragments of $P'$ and $T'$ match. On the other hand,  if $P[j]$ does not match $T[i+j-1]$, then from the definition of the code it follows that the distance between the corresponding chunks of $P'$ and~$T'$ will be at least $(1-\eps) w$ and at most $w$.
\end{proof}

To show a deterministic algorithm for the parameter $\S$, we again consider the partition of the alphabet $\Sigma_P$ into heavy and light characters. To count the mismatches caused by some heavy character, we create an instance of \DontCares. As the number of heavy characters is small, the total number of the created instances is small as well. For light characters, we use the superimposed codes similarly as in Theorem~\ref{th:UB_d_det}.

\begin{theorem}\label{th:UB_s_det}
Let $\S$ be the number of edges in the matching graph $M$. There exists an $(1-\eps)$-approximation deterministic algorithm that solves the counting variant of \GPM in $\timeUBsdet$ time. 
\end{theorem}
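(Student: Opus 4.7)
The plan is to reuse the heavy/light decomposition from the randomised counterpart (Theorem~\ref{th:UB_s_rand}) and plug in the deterministic superimposed-code machinery of Theorem~\ref{th:UB_d_det} on the light part. Set the threshold $\tau = \eps\sqrt{\S}/\log^{5/2} n$; the degenerate cases $\sqrt{\S} > m$ or $\tau < 1$ (i.e.\ $\S \le \eps^{-2}\log^5 n$) are handled by the naive character-by-character algorithm or by running only the heavy step below, both of which meet the target bound trivially, so we may assume $1 \le \tau \le m$. Call $b \in \Sigma_P$ \emph{heavy} when it matches more than $\tau$ characters of $\Sigma_T$, and \emph{light} otherwise. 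Counting edges of $M$, there are at most $\S/\tau$ heavy characters, and the subgraph of $M$ induced by $\Sigma_T$ and the light characters of $\Sigma_P$ has maximum $\Sigma_P$-side degree at most $\tau$.

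For the heavy part, mimic the reduction of Theorem~\ref{th:UB_s_rand}: for each heavy $b$, build the \DontCares instance with $T_b[j] = \texttt{0}$ if $T[j]$ does not match $b$ in $M$ (and $\texttt{?}$ otherwise) and $P_b[j] = \texttt{1}$ if $P[j] = b$ (and $\texttt{?}$ otherwise), and solve it via~\cite{CLIFFORD200753} in $\Oh(n\log m)$ time. Its output at alignment $i$ is exactly the number of positions $j$ with $P[j] = b$ and $T[i+j-1]$ not matching $b$. Summing across all heavy $b$ therefore yields the exact contribution $h_{\text{heavy}}[i]$ of mismatches due to heavy pattern characters, in total time $\Oh((\S/\tau)\, n\log m) = \Oh(\eps^{-1}\sqrt{\S}\, n \log^{7/2} n)$.

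For the light part, apply the proof of Theorem~\ref{th:UB_d_det} to the subgraph of $M$ induced by $\Sigma_T$ and the light characters of $\Sigma_P$, whose maximum degree is $\tau$. Theorem~\ref{th:superimposed_codes} produces, in $\ttime{n}$ time, superimposed codes of weight $w = \Oh(\SIwpar{n})$ and length $\ell = \Oh(\SIellshortpar{\tau}{n})$. Construct $T'$ exactly as in Theorem~\ref{th:UB_d_det}; when encoding $P$, for every position $j$ such that $P[j]$ is heavy overwrite the whole block $P'[(j-1)\ell+1, j\ell]$ with don't care symbols, and for every light $P[j]$ use the original encoding with the $1 \mapsto \texttt{?}$ replacement. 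Running Clifford and Clifford's algorithm on $(T',P')$ takes $\Oh(n\ell\log(m\ell)) = \Oh(\eps^{-1}\sqrt{\S}\, n\log^{7/2} n)$ time, and by the same per-position analysis as in Theorem~\ref{th:UB_d_det}---heavy blocks contribute zero because they are entirely don't care, every light matching pair contributes zero, and every light mismatching pair contributes between $(1-\eps)w$ and $w$---the returned value $h'[i]$ satisfies $(1-\eps) w \cdot h_{\text{light}}[i] \le h'[i] \le w \cdot h_{\text{light}}[i]$. Outputting $h_{\text{heavy}}[i] + h'[i]/w$ at each alignment then gives a $(1-\eps)$-approximation of the total mismatch count. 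The only real subtlety is verifying that blanking out the heavy blocks of $P'$ does not break the correctness argument of Theorem~\ref{th:UB_d_det}, which is immediate because that argument is already phrased position by position.
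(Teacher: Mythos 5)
Your proof follows the same heavy/light decomposition with the same threshold $\eps\sqrt{\S}/\log^{5/2}n$, handles the heavy characters with per-character \DontCares instances, and runs the superimposed-code counting of Theorem~\ref{th:UB_d_det} on the light subgraph after blanking heavy positions of $P'$ to don't cares --- this is precisely the paper's argument, just spelled out in a bit more detail (in particular the explicit recombination $h_{\text{heavy}}[i] + h'[i]/w$ and the degenerate-case check, which the paper leaves implicit). The proof is correct.
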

\begin{proof}
The structure of the algorithm is similar to that of Theorem~\ref{th:UB_s_rand}.  We define a character $b\in \Sigma_{P}$ \emph{heavy} if it matches at least $\eps \sqrt{\S} / \log^{5/2} n$ characters in $\Sigma_T$. The number of heavy characters is at most $\eps^{-1 }\sqrt{\S}  \log^{5/2} n$. To count the number of mismatches caused by a heavy character $b\in \Sigma_{P}$, we create a text $T_{b}[1,n]$ and a pattern $P_b[1,m]$:

\begin{tabular}{c @{\hspace{2cm}} c}
$T_{b}[j]=
\begin{cases}
\texttt{0} & \mbox{ if } T[i] \mbox{ and } b \mbox{ do not match under } M,\\
\texttt{?} & \mbox{ otherwise.}
\end{cases}
$
&
$
P_{b}[j]=
\begin{cases}
\texttt{1} & \mbox{ if } P[j]=b,\\
\texttt{?} & \mbox{ otherwise.}
\end{cases}
$
\end{tabular}

\noindent and run the algorithm of Clifford and Clifford~\cite{CLIFFORD200753} for $T_b$ and $P_b$. In total, this step takes $\Oh(\eps^{-1 } \sqrt{\S} n \log^{7/2} n)$ time. We now need to explain how we count the mismatches due to the light characters of~$P$. We use an algorithm similar to that of Theorem~\ref{th:UB_d_det} for $\D = \eps \sqrt{\S} / \log^{5/2} n$, except that we replace each heavy character of $P$ with the code $?^l$ (the don't care character repeated $l$ times, where $l$ is the length of the superimposed code). 
\end{proof}

\subsection{Parameter \texorpdfstring{$\I$}{I}}
\label{sec:ranges}
In this section we show a deterministic \GPM algorithm for the parameter $\I$. The algorithm solves the counting variant of the problem exactly, and we can immediately derive an algorithm for the reporting version with the same complexities as a corollary. We will need the following technical lemma.

\begin{lemma}\label{lem:partition}
Let $b$ be a parameter, $S= \{ x_{1},x_{2},\ldots,x_{\ell} \}$ be a sequence of integers, and $s=\sum_{i\in[\ell]}x_{i}$. Then $S$ can be partitioned into $\Oh(s/b+1)$ ranges $S_{1},S_{2},\ldots$ such that, for every $i$, either $S_{i}$ is a singleton or the sum of all elements in $S_{i}$ is at most $b$.
\end{lemma}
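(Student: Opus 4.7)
My plan is to prove the lemma via a straightforward greedy partitioning and then analyse the number of produced ranges by pairing consecutive ones.

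First, I would describe the greedy procedure. Scan the sequence $x_1, x_2, \ldots, x_\ell$ from left to right while maintaining a ``current'' range $R$, initially empty. For each $x_i$ in turn: if $R$ is empty, place $x_i$ into $R$; otherwise, if $\mathrm{sum}(R) + x_i \le b$, append $x_i$ to $R$; if neither condition holds, close $R$ (outputting it as $S_j$ for the next free index $j$) and start a new range consisting just of $x_i$. After processing all elements, output the final $R$. This clearly partitions $S$ into contiguous ranges and runs in $\Oh(\ell)$ time.

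Next I would verify the structural property. Any range produced by the procedure is either a singleton (possibly with value larger than $b$, which is allowed), or was formed by appending at least one element beyond the first while respecting the $\le b$ bound; in the latter case $\mathrm{sum}(S_j) \le b$ by construction. Hence every output range satisfies the desired condition.

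The crux is bounding the number $k$ of ranges by $\Oh(s/b + 1)$. The key observation is that whenever the algorithm closes a range $S_j$ (so $j < k$), the reason is that $\mathrm{sum}(S_j) + y > b$, where $y$ is the first element of $S_{j+1}$. Since $y \le \mathrm{sum}(S_{j+1})$, this gives
\[\mathrm{sum}(S_j) + \mathrm{sum}(S_{j+1}) > b \quad \text{for every } j = 1, 2, \ldots, k-1.\]
Summing this inequality over all $j$ from $1$ to $k-1$ yields
\[(k-1)\, b < \sum_{j=1}^{k-1}\bigl(\mathrm{sum}(S_j) + \mathrm{sum}(S_{j+1})\bigr) \le 2 \sum_{j=1}^{k} \mathrm{sum}(S_j) = 2s,\]
so $k < 2s/b + 1 = \Oh(s/b + 1)$, as required.

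I do not foresee any real obstacle here; the only mild care points are (i) handling the trivial case $k = 1$ (where the bound $\Oh(s/b+1)$ holds via the additive $+1$), and (ii) noting that the argument is oblivious to whether individual ranges happen to be singletons exceeding $b$, since the pairing inequality above uses only that a new range was started, not anything about the preceding range's size.
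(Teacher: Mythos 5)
Your proof is correct and takes essentially the same approach as the paper's: greedily scan left to right, then bound the number of ranges by pairing each range with its successor. Your accounting is slightly cleaner — the single inequality $\mathrm{sum}(S_j) + \mathrm{sum}(S_{j+1}) > b$ applies uniformly to every $j < k$ (whether or not $S_j$ is a large singleton), whereas the paper first charges the singletons with value exceeding $b$ separately (at most $s/b$ of them) and applies the pairing argument only to the remaining ranges. Both arguments implicitly rely on the elements being nonnegative (so that the first element of $S_{j+1}$ is at most $\mathrm{sum}(S_{j+1})$), which holds in the intended application to character frequencies.
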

\begin{proof}
We greedily partition $S$ from left to right. In every step we consider the remaining suffix of the sequence. If its first element is greater than $b$ then we create a singleton range. Otherwise, we choose the longest prefix of the remaining suffix consisting of numbers summing up to at most $b$ and create the corresponding range. Every range in the partition is either a singleton or consists of integers summing up to at most~$b$. It remains to argue that the number of ranges is small. But the number of singletons is less than $s/b$, and for every non-singleton range $S_{i}$ the sum of numbers in $S_{i}\cup S_{i+1}$ (assuming that $S_{i+1}$ exists) is greater than $b$, so the bound of $\Oh(s/b+1)$ follows.
\end{proof}

\noindent
We are now ready to show the main result of the section. 

\begin{theorem}\label{th:det_i}
For each character $a \in \Sigma_P$ consider a minimal set $I(a)$ of disjoint sorted intervals that contain the characters that match $a$, and define $\I = \sum_{j \in [m]} |I(P[j])|$. There is a deterministic algorithm that solves the counting version of \GPM in $\Oh(n\sqrt{\I \log m}+n\log n)$ time.
\end{theorem}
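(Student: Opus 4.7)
The plan is to apply Lemma~\ref{lem:partition} with threshold $b=\sqrt{\I\log m}$ to the sequence $(|I(P[i])|)_{i\in[m]}$, whose sum is $\I$, yielding $\Oh(\I/b+1)=\Oh(\sqrt{\I/\log m}+1)$ groups of consecutive pattern positions. Each group is either a singleton $\{j\}$ with $|I(P[j])|>b$ or a block $[j,j']$ whose $|I(P[\cdot])|$ sum to at most $b$. For each group I will build one binary-alphabet \DontCares instance whose mismatch count at alignment $i$ equals the mismatch contribution of the positions in the group, then aggregate across groups. With $\Oh(\sqrt{\I/\log m})$ groups and Clifford--Clifford's $\Oh(n\log m)$ algorithm~\cite{CLIFFORD200753} per instance, this yields the $\Oh(n\sqrt{\I\log m})$ term; the additive $\Oh(n\log n)$ covers the initial sort of the text characters and of the interval endpoints.

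The singleton case is direct. For $\{j\}$, I set $T'[p]=\texttt{0}$ if $T[p]\notin\bigcup I(P[j])$ (decided by an $\Oh(\log m)$-time binary search over the sorted intervals) and $T'[p]=\texttt{?}$ otherwise, and $P'[j]=\texttt{1}$, $P'[j'']=\texttt{?}$ for $j''\ne j$. The \DontCares mismatch count at alignment $i$ is then precisely $[T[i+j-1]\notin\bigcup I(P[j])]$, which is the mismatch contribution of position $j$ to that alignment.

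The non-singleton block $[j,j']$ is the technical crux. Since the block has at most $b$ intervals in total and thus at most $2b$ endpoints, these endpoints partition $\Sigma_T$ into $\Oh(b)$ elementary sub-ranges, each entirely inside or entirely outside every interval of the block; each text character belongs to a unique sub-range, and for every position $j''\in[j,j']$ the matching set $\bigcup I(P[j''])$ is a union of such sub-ranges. My plan is to assign each sub-range a binary label and each position $j''$ a target label so that one \DontCares instance reads off, at each alignment, the total number of positions $j''\in[j,j']$ with $T[i+j''-1]$ outside the matching set at $j''$. The main obstacle I anticipate is that distinct positions in the block may demand incompatible labels for the same sub-range, ruling out a naive one-label-per-sub-range assignment; I expect to circumvent this by exploiting both the block-length bound $|[j,j']|\le b$ (each $|I(P[\cdot])|\ge 1$) and the total interval bound $\le b$ to design a position-aware encoding that keeps the combined \DontCares instance on a binary alphabet of length $\Oh(n)$, so that its $\Oh(n\log m)$ cost fits the per-group budget.
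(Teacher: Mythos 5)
Your plan diverges from the paper's proof in a fundamental way, and the divergence is where the gap lies. The paper applies Lemma~\ref{lem:partition} to partition the \emph{text alphabet} $\Sigma_T$ into ranges according to character frequencies in $T$ (the sequence $\countchars(a_1),\ldots,\countchars(a_l)$ summing to $n$, with $b=n\sqrt{\log m/\I}$). This yields $\Oh(n/b+1)$ super-characters, one \DontCares instance per super-character for a Subset Pattern Matching step, and then a cheap correction pass over the $\Oh(\I)$ interval endpoints whose ranges have total frequency at most $b$, giving $\Oh(n^2\log m/b+\I b+n\log n)$. You instead partition the \emph{pattern positions} by $|I(P[\cdot])|$ and attempt one \DontCares instance per group; your singleton case is sound, but the non-singleton block cannot be compressed into a single binary instance as you hope.

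Concretely, the ``position-aware encoding'' you gesture at does not exist for blocks of length~$\geq 3$. Suppose the block is $\{1,2,3\}$ with $|I(P[j])|=1$ for each $j$, the three intervals pairwise disjoint, and the text contains one character inside each interval, say $c_1,c_2,c_3$ with $c_j\in \bigcup I(P[j])$ only. Whatever labels $T'$ assigns to the sub-ranges and whatever $P'[1],P'[2],P'[3]\in\{0,1,?\}$ you choose, the mismatch constraints force all six symbols into $\{0,1\}$ (since for each $j''$ there is some character that must register a mismatch at $j''$, ruling out $?$ on both sides), and then the constraints ``$c_1$ matches at $1$, mismatches at $2,3$'', ``$c_2$ matches at $2$, mismatches at $1,3$'', ``$c_3$ matches at $3$, mismatches at $1,2$'' are unsatisfiable over $\{0,1\}$: they force $P'[2]=P'[3]$, $P'[1]=P'[3]$ and $P'[1]\neq P'[2]$ simultaneously. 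So one binary \DontCares instance per block cannot read off the block's mismatch contribution, and no exploitation of $|[j,j']|\leq b$ or the total interval bound repairs this, because the obstruction is the bounded alphabet of a single convolution, not the block size. Switching to one instance per elementary sub-range brings you back to $\Oh(\I)$ instances overall, which is too slow. The paper avoids the issue entirely by making the \DontCares instances exact for the coarse alphabet (one instance per super-character, over all of $P$), and by confining the inexactness to alignments where the text character shares a low-frequency range with an interval endpoint, which can be corrected directly in $\Oh(\I b)$ time without any further convolutions.
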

\begin{proof}
If $\I > m^2$, we can use the naive algorithm that compares each $m$-length substring with the pattern character-by-character and takes $\Oh(mn)$ time in total. 

We first make a pass over $T$ and retrieve the set of distinct characters $a_1, a_2, \ldots, a_l$ of $\Sigma_T$ that occur in it, as well as their frequencies. This can be done in $\Oh(n \log n)$ time using a binary search tree. We partition $a_1, a_2, \ldots,a_l$ into ranges as follows. Let $\countchars(c)$, for $c\in\Sigma_T$, be the frequency (i.e. the number of occurrences) of $c$ in $T$. We apply Lemma~\ref{lem:partition} for $b>1$ that will be specified later and the sequence $\countchars(a_1), \countchars(a_2), \ldots, \countchars(a_l)$ which sums up to $n$. 

Let $\Sigma'_T$ be a new alphabet obtained by creating a character for every range in the partition, where $|\Sigma'_T|=\Oh(n/b+1)$.
For $c \in \Sigma'_T$ we denote by $\range(c)$ the range of~$\Sigma_T$ corresponding to $c$, and for $a \in \Sigma_T$ we denote by $\range^{-1}(a)$ the character of~$\Sigma'_T$ corresponding to the range containing $a$. We create a new text $T'[1, n]$ and pattern $P'[1, m]$ as follows. For every $i \in [n]$, we set $T'[i]=\range^{-1}(T[i])$. For every $j \in [m]$, we set $P'[j] = \{c \in \Sigma'_T \,|\, \range(c) \mbox{ contains a character that matches } P[j]\}$. As the number of the ranges is $\Oh(n/b+1)$, the size of the set $P'[j]$ is $\Oh(n/b+1)$. We represent it as a binary vector of length $\Oh(n/b+1)$. Furthermore, we can construct $T'$ in $\Oh(n)$ time, and $P'$ in $\Oh(\I + m(n/b+1))$ time.

After this initial step the algorithm consists of two phases. First, we solve the \textsc{Subset Pattern Matching} for $T'$ and $P'$ that consists of counting, for every $i \in [n-m+1]$, all positions $j \in [m]$ such that $T'[i+j-1]\notin P'[j]$. To this end, we create an instance of \DontCares for every $c \in \Sigma'_T$, namely, we create a text $T'_{c}[1, n]$ and a pattern $P'_{c}[1, m]$ as follows:

\begin{tabular}{c @{\hspace{2cm}} c}
$T'_{c}[i] =
\begin{cases}
\texttt{0} & \mbox { if } T'[i]=c,\\
\texttt{?} & \mbox {otherwise.}
\end{cases}
$
&
$P'_{c}[j]=
\begin{cases}
\texttt{0} & \mbox { if } c \in P'[j],\\
\texttt{1} & \mbox{ otherwise.}
\end{cases}
$
\end{tabular}

\noindent
We can solve all these instances in $\Oh(|\Sigma'_T| n\log m) = \Oh((n/b+1)n\log m)$  time~\cite{CLIFFORD200753}. Summing up the results, we obtain the result for the subset matching problem.

In the second phase, we slightly adjust the results obtained for \textsc{Subset Pattern Matching} to obtain the results for \GPM.  Consider a substring $T[i,i+m-1]$ that does not match $P$ because of a mismatch in position $j$ of the pattern, i.e. $T[i+j-1]$ does not match $P[j]$. We have two possible cases. The first case is when $T'[i+j-1] \notin P'[j]$. In this case, the mismatch is detected by the \textsc{Subset Pattern Matching} algorithm. The second case is when $T'[i+j-1] \in P'[j]$. Observe that in this case, $\range(T'[i+j-1])$ cannot be a singleton and must contain an endpoint of some interval of characters that match~$P[j]$. 

To detect such mismatches, we run the following algorithm. For each $j \in [m]$, we consider the intervals $I(P[j])$ of the characters that match $P[j]$. For every endpoint $c\in\Sigma_T$ of the intervals in $I(P[j])$, we iterate over all $a \in \range^{-1}(c)$ such that $a$ does not match $P[j]$ and all occurrences of $a$ in the text. 
Summing over all $j$ and $a$, there are in total $\Oh(\I \cdot b)$ of the occurrences due to the properties of the partition and the fact that $\range(T'[i+j-1])$ is not a singleton. We can find the occurrences in $\Oh(\I \cdot b + n + mn/b)$ time as follows. First we find the ranges containing the endpoints in $\Oh(\I + m(n/b+1))$ time similarly to above, and we can generate the lists of occurrences of every character $a \in \Sigma_T$ in $T$ by one pass over $T$ in $\Oh(n \log n)$ time. For each such occurrence $T[k] = a$ that does not match $P[j]$, we increment the number of mismatches for the substring $T[k-j+1,k+m-j]$. This correctly detects every mismatch that has not been accounted for in the first phase, and hence allows counting all mismatches in $\Oh(\I+m(n/b+1)+(n/b+1)n\log m+n\log n+\I \cdot b)=\Oh(n^2\log (m)/b+n\log n+\I \cdot b)$ total time. Substituting $b=n\sqrt{\log m/\I}$ gives us the claim of the theorem.
\end{proof}

\begin{corollary}\label{cor:threshold}
There is a deterministic algorithm that solves the counting variant of the threshold pattern matching problem in $\Oh(n (\sqrt{m \log m} + \log n))$ time.
\end{corollary}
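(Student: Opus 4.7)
The plan is to derive Corollary~\ref{cor:threshold} as a direct instantiation of Theorem~\ref{th:det_i}, by observing that threshold pattern matching is the special case of \GPM in which the matching relationship has a particularly simple structure with respect to the parameter $\I$.

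First, I would recall the definition of threshold pattern matching: two characters $a \in \Sigma_T$ and $b \in \Sigma_P$ match iff $|a-b| < \delta$. Therefore, for every character $b \in \Sigma_P$, the set of characters of $\Sigma_T$ matching $b$ is contained in the single interval $(b-\delta, b+\delta)$. In the notation of Theorem~\ref{th:det_i}, this means that for every $a \in \Sigma_P$ the minimal set $I(a)$ of disjoint sorted intervals containing the matching characters satisfies $|I(a)| \le 1$, and so
\[ \I \;=\; \sum_{j \in [m]} |I(P[j])| \;\le\; m. \]

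Next, I would verify that the oracles required by the framework of Theorem~\ref{th:det_i} can be supported in $\Oh(1)$ time for threshold pattern matching; each query reduces to a constant number of arithmetic comparisons on $\delta$ and on character values, and the single interval $I(P[j])$ for any $j$ can be read off in constant time as $(P[j]-\delta, P[j]+\delta)$.

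I would then simply invoke Theorem~\ref{th:det_i}, which gives a deterministic algorithm for the counting variant of \GPM running in time
\[ \Oh\bigl(n\sqrt{\I \log m} + n \log n\bigr) \;=\; \Oh\bigl(n\sqrt{m \log m} + n \log n\bigr) \;=\; \Oh\bigl(n\bigl(\sqrt{m \log m} + \log n\bigr)\bigr), \]
which is exactly the bound claimed in the corollary. There is no real obstacle here: the only thing to check carefully is the bound $\I \le m$ and that the reduction respects the oracle model assumed in Theorem~\ref{th:det_i}; both are immediate from the interval structure of the threshold matching relationship.
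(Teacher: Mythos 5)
Your proof is correct and matches the paper's reasoning: the paper likewise observes (in the Related Work section) that threshold matching gives exactly one interval per pattern position, so $\I=m$, and then cites Theorem~\ref{th:det_i} to obtain the bound. Your additional sanity check of the oracle model is fine but not something the paper belabors, so this is essentially the same argument.
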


\section{Lower Bounds for \GPM}
\label{sec:lb}
In this section we give lower bounds for \GPM algorithms.
All the lower bounds are presented for the reporting variant of \GPM, so they immediately apply also to the counting variant.
Recall that we assume to have access to three oracles that can answer the following questions about the matching graph $M$ in $\Oh(1)$ time:

\begin{enumerate}
\item Is there an edge between $a\in\Sigma_{T}$ and $b\in\Sigma_{P}$?
\item What is the degree of a character $a\in\Sigma_{T}$ or $b\in\Sigma_{P}$?
\item What is the $k$-th neighbor of $a\in\Sigma_T$?
\end{enumerate}

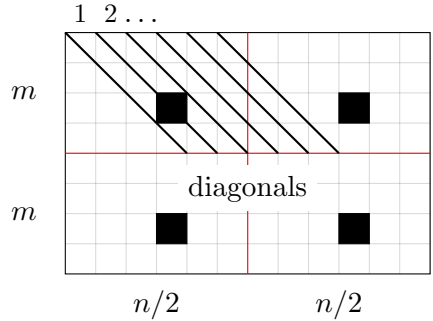
\begin{figure}[h!]
\begin{center}
\begin{tikzpicture}[scale = 0.8]
\foreach \x in {1,2,...,12} {
	\draw[gray, opacity=0.3] (\x*0.5,0)--(\x*0.5,4);
}
\foreach \x in {1,2,...,8} {
	\draw[gray, opacity=0.3] (0,\x*0.5,0)--(6,\x*0.5);
} 

\draw[fill=black] (1.5,2.5) rectangle (2,3);
\draw[fill=black] (4.5,2.5) rectangle (5,3);
\draw[fill=black] (1.5,.5) rectangle (2,1);
\draw[fill=black] (4.5,.5) rectangle (5,1);
 
\draw (0,0) rectangle (6,4);
\draw[red] (0,2)--(6,2);
\node[below] at (1.5,-0.1) {$n/2$};
\node[below] at (4.5,-0.1) {$n/2$};
\draw[red] (3,0)--(3,4);
\node[left] at (-0.3,1) {$m$};
\node[left] at (-0.3,3) {$m$};
\node[above] at (0.25,4) {\small $1$};
\node[above] at (0.75,4) {\small $2$};
\node[above] at (1.25,4) {$\ldots$};

\foreach \x in {1,2,...,6} {
	\draw[thick] (\x*0.5-0.5,4)--(1.5+\x*0.5,2);
}

\node[rectangle, below, fill=white] at (3,1.8) {diagonals}; 
\end{tikzpicture}
\end{center}
\caption{The adjacency matrix of the matching graph $M$. We show diagonals (solid lines) and a quadruple of related cells (black). Note that among any quadruple of related cells, only one can belong to a diagonal.}
\label{fig:matrix}
\end{figure}

We first use an adversary-based argument to show an $\Omega(\mathcal{S})$ time lower bound.

\begin{restatable}{lemma-restatable}{deterministiclb}\label{lm:det_LB}
Any deterministic algorithm for \GPM requires $\Omega(\mathcal{S})$ time.
\end{restatable}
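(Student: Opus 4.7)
The plan is to exhibit a simple hard family of instances with exactly $\mathcal{S}$ matching pairs and argue via an adversary strategy that any deterministic algorithm must issue $\Omega(\mathcal{S})$ oracle queries on one of them. I would take $P=a$ of length one, $T=b_1 b_2\cdots b_{2\mathcal{S}}$ consisting of $2\mathcal{S}$ pairwise distinct characters, and a matching graph $M$ in which $a$ is joined to exactly $\mathcal{S}$ of the $b_i$'s (with no other edges), the precise $\mathcal{S}$-element neighbor set of $a$ being fixed only lazily by the adversary. The correct output is then the set $\{i : (b_i,a)\in M\}$, so the algorithm must decide, for each of the $2\mathcal{S}$ text characters, whether it is a neighbor of $a$.

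The adversary maintains two disjoint sets $Y,N\subseteq[2\mathcal{S}]$ of indices committed to be neighbors and non-neighbors of $a$, initially both empty, and answers so that every commitment extends to a legal graph of exactly $\mathcal{S}$ edges. On an edge query $(b_k,a)$ or a degree query on $b_k$ with $k\notin Y\cup N$ it answers ``yes'' and adds $k$ to $Y$ while $|Y|<\mathcal{S}$, and otherwise answers ``no'' and adds $k$ to $N$; on a ``$j$-th neighbor of $a$'' query it returns some $b_k$ with $k\notin Y\cup N$, adds $k$ to $Y$, and uses the commitment order as the fixed neighbor ordering on $a$; a degree query on $a$ is simply answered by $\mathcal{S}$. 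Every query thus enlarges $Y\cup N$ by at most one, so after $q$ queries both $|Y|\le q$ and $|N|\le q$.

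If $q<\mathcal{S}$, then $|Y|<\mathcal{S}$ and $|N|<\mathcal{S}$, so the undetermined set $[2\mathcal{S}]\setminus(Y\cup N)$ has size at least $\mathcal{S}+1$ while only $\mathcal{S}-|Y|\ge 1$ of its indices need to become neighbors in any completion. In particular I can pick two indices $i_1,i_2\notin Y\cup N$ and exhibit two legal completions, one with $i_1$ a neighbor and $i_2$ not and one with the roles swapped, both consistent with all answers given so far. The two yield different output sets, so the algorithm cannot be correct on both, and hence must make at least $\mathcal{S}$ queries on the instance, which forces $\Omega(\mathcal{S})$ time. The main point to check is the mutual consistency of the adversary's answers across all three oracle types; this is immediate once one notes that $Y$ is only extended, its elements are numbered by commitment order so as to define oracle~$3$'s answers, and all other answers are direct read-outs of the current $(Y,N)$ partition together with the announced degree $\mathcal{S}$ of $a$.
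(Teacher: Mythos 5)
Your proof is correct: the adversary maintains a consistent partial instance, and after $q<\mathcal{S}$ queries there are still at least two completions with different answer sets, so a deterministic algorithm that stops early must err on one of them. The accounting ($|Y\cup N|\le q$, undetermined set of size $\ge\mathcal{S}+1$, $1\le\mathcal{S}-|Y|\le\mathcal{S}$ still to place) is right, and the lazy fixing of the neighbor order of $a$ makes the three oracles mutually consistent.

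However, the route is genuinely different from the paper's, and weaker in two ways worth noting. First, with $P$ of length one and $a$ of degree $\mathcal{S}$, the correct output set $\{i:(b_i,a)\in M\}$ always has exactly $\mathcal{S}$ elements, so the instance is one where the answer \emph{itself} has size $\Theta(\mathcal{S})$; the paper instead builds an instance where the adversary guarantees there are \emph{no} occurrences at all, yet the algorithm must still inspect $\Omega(\mathcal{S})$ matrix entries. This is strictly stronger: the paper's argument bounds even the one-bit decision problem (``does any occurrence exist?''), while on your instance that decision is trivially ``yes'' whenever $\mathcal{S}\ge1$. Second, this distinction matters structurally: the paper immediately reuses its hard instance (zero-occurrence, degree-oracle-useless via the ``related quadruples'' device) inside Lemma~\ref{lm:random_LB}, where Yao's principle is applied to a distribution over instances that differ only in a single hidden entry. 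Your construction does not feed into that randomized bound. In short, your argument proves Lemma~\ref{lm:det_LB} as literally stated, but sidesteps rather than solves the difficulty the paper calls out (making the degree oracle uninformative) and does not yield the decision-version strengthening that the paper's proof silently provides and then relies on.
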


\begin{proof}
We will show that any deterministic algorithm checking if there exists at least one occurrence needs to inspect $\Omega(\S)$ entries of $M$ in the worst case by an adversary-based argument. In particular, this implies a lower bound of $\Omega(nm)$ when $\S=\Theta(nm)$. The main difficulty in the argument is to design the input so that the second oracle is essentially useless.

It will be convenient for us to think in terms of the adjacency matrix of the matching graph~$M$ that we denote by $\mathcal{M}$.
Let us assume that $n\geq 2m$ is even, $\Sigma_{T}=[n]$, and $\Sigma_{P}=[2m]$. We split both alphabets into halves. For every $a\in [n/2]$ and $b\in [m]$ we will choose one of the following two possibilities:

\begin{enumerate}
\item $\mathcal{M}[a,b]=\mathcal{M}[n/2+a,m+b]=1$ and $\mathcal{M}[n/2+a,b]=\mathcal{M}[a,m+b]=0$,
\item $\mathcal{M}[a,b]=\mathcal{M}[n/2+a,m+b]=0$ and $\mathcal{M}[n/2+a,b]=\mathcal{M}[a,m+b]=1$.
\end{enumerate}

We call $\mathcal{M}[a,b]$, $\mathcal{M}[n/2+a,b]$, $\mathcal{M}[a,m+b]$ and $\mathcal{M}[n/2+a,m+b]$ \emph{related}. Observe that, irrespectively of all such choices, the second oracle returns the same number for every $b\in\Sigma_{P}$ and every $a\in \Sigma_{T}$,
and so the algorithm only needs to query the first oracle. 

We choose the text $T= 1 \ 2 \ldots n/2 \ 1 \ 2 \ldots n/2$ and the pattern $P = 1 \ 2 \ldots m$. Clearly, $P$ occurs in $T$ when, for some $a \in [n/2]$, we have $M[1+(a+b-2) \bmod n/2, b] = 1$ for every $ b\in [m]$. We call the set of corresponding entries of $M$ a \emph{diagonal} (see Fig.~\ref{fig:matrix}).

Note that among any quadruple of related entries exactly one can belong to the diagonals. Furthermore, suppose that an algorithm retrieves the values in a quadruple of related entries $\mathcal{M}[a,b]$, $\mathcal{M}[n/2+a,m+b]$, $\mathcal{M}[n/2+a,b]$, $\mathcal{M}[a,m+b]$. This can be done by one of the following queries: ask for the value of any of these four entries, or retrieve the particular neighbor of one of the nodes $a$, $n/2+a$, $b$, or $m+b$. In both cases, we retrieve only the related entries and spend $\Omega(1)$ time for any of the retrieved quadruples.

The adversary proceeds as follows. If the algorithm retrieves a quadruple containing $\mathcal{M}[a,b]$, for $a \in [n/2]$ and $b \in [m]$, such that the value of $\mathcal{M}[a,b]$ is not yet determined, the adversary checks if setting $\mathcal{M}[a,b] = 1$ would result in creating a diagonal containing only 1s. If so, the adversary sets $\mathcal{M}[a,b]=0$, and otherwise the adversary sets $\mathcal{M}[a,b] = 1$. In other words, the adversary sets $M[a,b] = 0$ when it is the last undecided entry on its diagonal. 

The algorithm can report an occurrence only after having verified that the corresponding diagonal contains only 1s, and the adversary makes sure that this is never the case. On the other hand, if the algorithm terminates without having reported an occurrence while there exists a diagonal that has not been fully verified then the adversary could set its remaining entries to 1s and obtain an instance that does contain an occurrence. Consequently, the algorithm needs to retrieve all the entries in all the diagonals, and as we showed, it requires $\Omega (mn) = \Omega(\mathcal{S})$ time. 

Note that above $\S = nm/2$. The proof can be extended to $\S < nm/2$ as follows. If $\S \geq m$ we set $n' = \lfloor \S / m \rfloor$ and choose the text to be the prefix of length $n$ of $(1 \ 2 \ldots n')^{\infty}$ (the string $1 \ 2 \ldots n'$ repeated infinitely many times). Then the above argument shows that any algorithm needs to inspect $n' m \geq \S/2$ entries of $\mathcal{M}$. If $\S<m$ we choose the pattern to be the prefix of length $m$ of $(1 \ 2 \ldots \S)^{\infty}$ (the string $1 \ 2 \ldots \S$ repeated infinitely many times), the text to be $1^n$ ($1$ repeated $n$ times) and proceed as above to argue that one must inspect $\Omega(\mathcal{S})$ entries of~$\mathcal{M}$.
\end{proof}

We now move to Monte Carlo algorithms that determine all occurrences of the pattern in the text with small error probability.

\begin{restatable}{lemma-restatable}{randlb}\label{lm:random_LB}
Any Monte Carlo algorithm for \GPM with constant error probability $\eps < 1/2$ requires $\Omega(\mathcal{S})$ time.
\end{restatable}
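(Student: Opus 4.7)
The plan is to invoke Yao's minimax principle: it suffices to exhibit a distribution $\mu$ over instances on which every deterministic algorithm issuing $o(\S)$ oracle queries must err with probability larger than $\eps$. I will build $\mu$ on top of exactly the related-quadruple matrix structure from Lemma~\ref{lm:det_LB}: the text $T=1\,2\,\ldots n/2\,1\,2\,\ldots n/2$ and pattern $P=1\,2\,\ldots m$ are fixed, and the only randomness lies in which of the two configurations is chosen at each of the $nm/2$ quadruples. Three properties inherited from that construction are essential: any oracle query reveals exactly one quadruple; each quadruple lies on exactly one of the $n/2$ diagonals and determines a single diagonal entry; and the total number of edges in $M$ equals $\S=nm$ no matter how the quadruples are configured.

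Under $\mu$, the $n/2$ diagonals are sampled independently. With probability $1/2$ a diagonal is declared \emph{all-1}, and each of its $m$ quadruples takes the configuration putting a $1$ on the diagonal (so this diagonal produces an occurrence); with probability $1/2$ it is declared \emph{one-zero}, and a uniformly random one of its $m$ quadruples is flipped so that exactly one diagonal entry equals $0$ (no occurrence on this diagonal). Reporting correctly on the instance is then equivalent to recovering the ``all-1'' indicator of every diagonal.

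Fix a deterministic algorithm that issues at most $q$ queries in total and let $K_d$ denote the number of distinct quadruples it probes on diagonal $d$, so $\sum_d K_d\le q$. Because the diagonals are independent under $\mu$, observations outside diagonal $d$ carry no information about whether $d$ is all-1, and thus the Bayes-optimal classifier for $d$ may use only the $K_d$ probed entries on $d$. A direct calculation shows that, conditional on $K_d=k$, every decision rule for $d$ errs with probability at least $(1-k/m)/2$; the minimizer is the rule ``report $d$ iff none of its probed entries is $0$''. Taking the expectation over $K_d$ and summing over the $n/2$ diagonals,
\[
\mathbb{E}[\text{number of misclassified diagonals}]\;\ge\;\frac{n}{4}-\frac{q}{2m},
\]
and since this count is always at most $n/2$, Markov's inequality yields $\Pr[\text{algorithm errs on the instance}]\ge \tfrac12-q/(nm)$. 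Requiring this to be at most $\eps<1/2$ forces $q\ge\bigl(\tfrac12-\eps\bigr)\,nm=\Omega(\S)$. The regime $\S<nm/2$ is handled, exactly as in Lemma~\ref{lm:det_LB}, by replacing the text (when $\S\ge m$) or the pattern (when $\S<m$) with a shorter periodic string so that the bound scales with the actual value of $\S$.

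The main obstacle in turning this into a fully rigorous argument is the adaptive nature of the algorithm: $K_d$ is itself a random variable that may depend on observations from diagonals other than $d$. Independence of diagonals under $\mu$ rescues us, since Bayes-optimality is unaffected by conditioning on additional independent evidence, and linearity of expectation then absorbs all adaptivity through $\sum_d\mathbb{E}[K_d]\le q$.
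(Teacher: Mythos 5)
Your overall plan (Yao's principle plus the same related-quadruple matrix) is sound, and the per-diagonal distribution you chose is a legitimate alternative to the paper's much simpler one (there the text is either all-1s or contains a single zero globally, so the probe sequence of any deterministic algorithm is essentially fixed, avoiding the adaptivity issue entirely). However, your key claim --- ``conditional on $K_d=k$, every decision rule for $d$ errs with probability at least $(1-k/m)/2$'' --- is false, and the gap is exactly the adaptivity you flagged at the end. The event $\{K_d=k\}$ is \emph{not} independent of the observations on diagonal $d$; it is determined by them. Concretely, suppose the algorithm probes positions $1,2,\ldots,m/2$ of $d$ in order, halting immediately upon seeing a $0$, and reports ``all-1'' iff it saw no $0$. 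Then for any $k<m/2$, $\{K_d=k\}$ occurs only when the $0$ sits at position $k$, so the algorithm knows the answer with certainty and the conditional error is $0$, not $(1-k/m)/2\approx 1/2$. Even the averaged form fails: for this algorithm $\mathbb{E}[W_d]=1/4$ while $\mathbb{E}[K_d]\approx 7m/16$ gives the purported bound $1/2-\mathbb{E}[K_d]/(2m)\approx 9/32>1/4$. Your appeal to ``Bayes-optimality unaffected by conditioning on additional independent evidence'' does not help, because $K_d$ is not independent evidence.

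The argument is salvageable, but it needs a different quantity in place of $K_d$. One fix: let $\tilde K_d$ be the number of positions the algorithm would probe on $d$ in the ``virtual'' run where $d$'s entries are forced to all-1 (everything else as sampled). If the secret position $i_d$ avoids that virtual probe set --- which, since $i_d$ is uniform and independent, happens with probability $1-\tilde K_d/m$ --- then the real run is identical under both hypotheses for $d$ and the algorithm errs on $d$ with probability exactly $1/2$. Since $K_d=\tilde K_d$ whenever $d$ is actually all-1 (an event of probability $1/2$), $\mathbb{E}[\tilde K_d]\le 2\,\mathbb{E}[K_d]$, so $\sum_d\mathbb{E}[\tilde K_d]\le 2q$ and one recovers $\mathbb{E}[\sum_d W_d]\ge n/4 - q/m$, still $\Omega(n)$ for $q\le nm/8$. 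The paper sidesteps all of this by planting at most one zero in the whole matrix and reducing to the yes/no question ``do all alignments match,'' which makes the deterministic algorithm's probe sequence non-adaptive up to the (certainty-giving) first zero and yields a short explicit error calculation in terms of $p$ and the number of probed quadruples $x$. As written, your proof has a genuine gap at the conditional-error step.
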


\begin{proof}
By Yao's minimax principle~\cite{yao}, we only have to exhibit a distribution on the inputs such that, for any deterministic algorithm $A$ that errs with probability at most $\eps$, the expected time is $\Omega(\mathcal{S})$. We will show that this bound holds even for a simpler problem, when the algorithm needs to return a single bit $\texttt{true}$ if the pattern matches all substrings of the text, and $\texttt{false}$ otherwise.

We will only show the details of the argument for even $n\geq 2m$ and $\S=nm/2$, but it can be generalised as above. We also assume for convenience that $\eps<1/4$, but this can be amplified to $\eps<1/2$ by standard argument. 

The distribution on the inputs is defined as follows. We consider $\mathcal{M}$, $T$ and $P$ as in Lemma~\ref{lm:det_LB}, except that now with probability $p$ to be fixed later the entries on every diagonal of $M$ are set to 1, and with probability $1-p$ exactly one of these entries is set to $0$. In other words, for any input the pattern either matches all substrings of the text, or there is exactly one substring that does not match the pattern due to exactly one character. 

Due to the way we choose the values of the related entries in $\mathcal{M}$ (see Lemma~\ref{lm:det_LB}), the oracle that returns the degrees of the characters in the matching graph is useless. The other two oracles spend $\Omega(1)$ time to retrieve the values of any quadruple of related entries. Any deterministic algorithm $A$ inspects a sequence of $x$ quadruples of related entries of~$\mathcal{M}$. If after inspecting the $i$-th quadruple it detects a $0$ in a diagonal, it returns \texttt{false} and stops. In this case, the algorithm is always correct. If it never detects a $0$, after inspecting all $x$ quadruples, it returns $b \in \{\texttt{false},  \texttt{true}\}$. 
We calculate the error probability for every $x$ and $b$.

\begin{enumerate}
\item If $b=\texttt{false}$ then $A$ errs with probability $p$;
\item If $b=\texttt{true}$ then $A$ errs with probability at least $(1-p) \cdot (\S-x)/\S$.
\end{enumerate}

We choose $p\in (2\eps,1-2\eps)$, which exists for $\eps < 1/4$. Since $A$ has error probability at most~$\eps$, we have $(1-p) \cdot (\S-x)/\S \leq \eps$, so $x \geq  \frac{1-p-\eps}{1-p} \cdot \S$. 
The expected number of quadruples retrieved by the algorithm equals
\[p\cdot x+(1-p)\cdot\left(\sum_{i\in [x]} (i/\S) +x\cdot(\S-x)/\S\right) = x-(1-p)\cdot\frac{x(x-1)}{2\S}\geq x-\frac{x^2}{2\S}\]
Plugging in the smallest possible value of $x=\frac{1-p-\eps}{1-p} \cdot \S$ makes the expected number of quadruples retrieved by the algorithm $A$ to be at least $\frac{1}{2}\S - \S \cdot \frac{\eps^2}{2(1-p)^2}$, which is $\Omega(\mathcal{S})$ as $1-p > 2\eps$.
The time bound follows.
\end{proof}

We now show lower bounds for \GPM conditional on hardness of Boolean matrix multiplication. 

\begin{conjecture}[\cite{AbboudV2014}]
For any $\alpha,\beta,\gamma,\eps > 0$, there is no combinatorial\footnote{It is not clear what combinatorial means precisely, but fast matrix multiplication is definitely non-combinatorial. Arguably neither is FFT used in our algorithms, thus making them non-combinatorial.} algorithm for multiplying two Boolean matrices of size $N^{\alpha} \times N^{\beta}$ and $N^{\beta} \times N^{\gamma}$ in time $\Oh(N^{\alpha+\beta+\gamma-\eps})$.
\end{conjecture}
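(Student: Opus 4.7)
The statement is the (combinatorial) Boolean matrix multiplication hardness hypothesis of Abboud and Vassilevska Williams. Unlike the preceding lemmas in this section, this is a working hypothesis rather than a theorem to be established from scratch within the paper, and any serious attempt at an unconditional proof would be far beyond the reach of current techniques. Accordingly, the honest ``plan'' is mostly structural: identify what would need to be shown, and acknowledge precisely where the argument must stop.

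The plan would be threefold. First, I would try to formalize ``combinatorial algorithm'' in a way that excludes Strassen-style fast matrix multiplication yet still captures the Four-Russians method and its refinements (including the recent $N^{3}/\log^{4} N$ algorithms); a natural candidate is the class of algorithms manipulating the matrix entries only by comparisons, additions, and bitwise operations, with no use of ring identities over characteristic zero. Second, in that restricted model, I would look for lower bounds using the standard toolkit: information-theoretic entropy counting, communication complexity, and element-distinctness-style adversary arguments. Third, as a fallback, I would attempt to reduce from an even more fundamental conjectured hard problem, so as to at least fit the statement into the existing web of conditionally equivalent fine-grained hypotheses (SETH, 3SUM, APSP, OV), as discussed in~\cite{AbboudV2014}.

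The main obstacle, and the reason the statement remains a conjecture, is the first step: there is no satisfying formalization of ``combinatorial'' under which we can currently prove nontrivial lower bounds. Even for very restricted models, all known information-theoretic or communication arguments yield only $\Omega(N^{\alpha+\gamma})$-type bounds arising from the input/output size, which are far weaker than the claimed $\Omega(N^{\alpha+\beta+\gamma-\eps})$. A genuine proof would in particular entail strongly super-linear lower bounds for an explicit problem, which is a notoriously open direction in complexity theory. For the purposes of this paper the conjecture is therefore used, following the standard convention in fine-grained complexity, purely as an axiomatic hypothesis, and it is the subsequent Lemma~\ref{lm:cond_LB_s} and Corollary~\ref{cor:cond_LB_d_i} that do the nontrivial work, translating this hypothesis into conditional lower bounds for \GPM parameterised by $\D$, $\S$, and $\I$.
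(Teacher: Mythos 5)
You correctly recognize that this is not a theorem the paper proves but a conjectured hardness hypothesis cited from~\cite{AbboudV2014} and used purely as an axiom for the conditional lower bounds in Lemma~\ref{lm:cond_LB_s} and Corollary~\ref{cor:cond_LB_d_i}. Your treatment matches the paper's: the paper offers no proof and does not attempt one, and your discussion of why a proof is out of reach (no agreed formalization of ``combinatorial,'' and known lower-bound techniques topping out near input/output size) is a fair and accurate account of the state of the art.
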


A simple adaptation of the folklore lower bound for computing the Hamming distance (cf.~\cite{GawrychowskiU18}) yields the following lower bounds.

\begin{restatable}{lemma-restatable}{conditionallbs}\label{lm:cond_LB_s}
For any $\alpha \geq 1$, and $1 \ge \beta, \eps > 0$, there is no combinatorial algorithm that solves \GPM in time $\Oh(\S^{0.5-\eps} n)$, for $n=\Theta(m^{(1+\alpha)/2})$ and $\S=\Theta(m^{\beta})$.
\end{restatable}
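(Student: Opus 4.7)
The plan is to reduce a rectangular variant of Boolean matrix multiplication (BMM) to \GPM, generalizing the folklore reduction from BMM to Hamming-distance pattern matching (cf.~\cite{GawrychowskiU18}). I set $N = m^{1/2}$, so that $m = N^{2}$, $n = m^{(1+\alpha)/2} = N^{\alpha+1}$, and $\S = m^{\beta} = N^{2\beta}$. The BMM instance under consideration is the product of an $N^{\alpha} \times N^{\beta}$ Boolean matrix $A$ by an $N^{\beta} \times N$ Boolean matrix $B$; invoking the conjecture with parameters $(\alpha,\beta,\gamma=1)$, no combinatorial algorithm computes $AB$ in time $\Oh(N^{\alpha+\beta+1-\eps'})$ for any $\eps'>0$.

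The reduction builds a single \GPM instance out of $A$ and $B$. The text $T$ of length $n = N^{\alpha+1}$ is formed by concatenating the $N^{\alpha}$ rows of $A$, each padded from length $N^{\beta}$ to length $N$ with separator characters. The pattern $P$ of length $m = N^{2}$ is formed by concatenating the $N$ columns of $B$, each likewise padded from $N^{\beta}$ to $N$. The alphabets $\Sigma_{T}$ and $\Sigma_{P}$ are enriched so that text characters carry row-bit information and pattern characters carry column-bit-plus-column-label information, and the matching relation is designed so that the counting-variant \GPM output at suitable alignments yields all $N^{\alpha+1}$ entries of $AB$ (up to $\Oh(n)$ bookkeeping to subtract the known row and column sums of $A$ and $B$). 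By grouping columns of $B$ into $\Theta(N^{\beta})$ labels and letting each label contribute $\Theta(N^{\beta})$ matching pairs, the matching graph $M$ has exactly $|M| = \S = \Theta(N^{2\beta}) = \Theta(m^{\beta})$ edges; the enriched alphabets remain polynomial in $n$.

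A hypothetical combinatorial \GPM algorithm running in time $\Oh(\S^{0.5-\eps} n)$, applied to this instance, then yields a combinatorial BMM algorithm of time $\Oh(m^{(1+\alpha+\beta)/2 - \beta\eps}) = \Oh(N^{\alpha+\beta+1 - 2\beta\eps})$, contradicting the BMM conjecture instantiated with the gap $\eps' = 2\beta\eps > 0$. This yields the claimed lower bound. Note that the parameter regime $\alpha \geq 1$ and $\beta \leq 1$ is exactly what is needed for the text to be long enough to host $A$'s rows and for the pattern to be long enough to host $B$'s columns after the padding.

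The main technical obstacle is the design in the second step: I need a \emph{single} \GPM call (rather than $\Omega(N)$ calls, which would blow up the BMM time by $N = m^{1/2}$ and destroy the tight $\sqrt{\S}$ factor essential for the contradiction) to suffice for recovering the whole product $AB$, while the matching graph still has exactly $\Theta(m^{\beta})$ edges. The standard binary Hamming-style encoding gives only $\S = \Oh(1)$ and collapses the bound to the trivial $\Omega(n)$; to genuinely exploit the $\sqrt{\S}$ factor, the richer labeling of $B$'s columns must be carefully combined with cyclic shifts and with the counting-variant output of \GPM so that each individual $(AB)[i,j]$ can be read off from a specific alignment.
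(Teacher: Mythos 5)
Your high-level plan matches the paper's (reduce rectangular Boolean matrix multiplication to a \emph{single} \GPM instance whose matching graph has $\Theta(m^\beta)$ edges, then invoke the conjecture with parameters $(\alpha,\beta,1)$; the final arithmetic $\Oh(\S^{0.5-\eps}n)=\Oh(N^{1+\alpha+\beta-2\beta\eps})$ is the same), but you have explicitly deferred the construction that makes this go through, and the one concrete choice you commit to is flawed. You pad both the rows of $A$ in the text and the columns of $B$ in the pattern to the \emph{same} block length $N$. With equal periods, any shift that aligns $A_i$ with $B_j$ simultaneously aligns $A_{i+1}$ with $B_{j+1}$, $A_{i+2}$ with $B_{j+2}$, and so on, so the counting output at that shift is a sum over $\Theta(\min(N^\alpha,N))$ row--column pairs rather than a statistic of a single pair; disentangling $(AB)[i,j]$ from such sums is exactly the obstacle you name but do not resolve. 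The paper sidesteps it by making the two block periods coprime: the text is $?^{z^2}A_1?^{z-y+1}A_2?^{z-y+1}\cdots A_x?^{z^2}$ (period $z+1$) and the pattern is $B_1?^{z-y}B_2?^{z-y}\cdots B_z$ (period $z$), with $z=N$, $y=N^\beta$. Since $\gcd(z,z+1)=1$, at the shift where $A_i$ lines up with $B_j$ no other row and column line up at the same in-block offset, so a character of the alphabet $[y]$ can only be aligned with itself inside the $(i,j)$-pair.

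The paper's encoding is also cleaner than the Hamming-style one you gesture at. It replaces $A_i[k]=1$ by the character $k\in[y]$ and $A_i[k]=0$ by the don't-care $?$, symmetrically for the columns of $B$, and declares two non-$?$ characters to \emph{match iff they are different} (with $?$ matching everything). Then the $(i,j)$-alignment is an occurrence iff no $k$ has $A_i[k]=B_j[k]=1$, i.e.\ iff $(AB)[i,j]=0$: the whole Boolean product is read off from the \emph{reporting} output of a single \GPM call, with no row/column-sum bookkeeping. Crucially, this matching relation on an alphabet of size $y+1=N^\beta+1$ has $\S=\Theta(y^2)=\Theta(N^{2\beta})=\Theta(m^\beta)$ edges automatically --- no ad hoc ``grouping of columns into labels'' is needed, and indeed your description of that grouping ($N$ columns into $\Theta(N^\beta)$ labels, each label contributing $\Theta(N^\beta)$ matching pairs) is not accompanied by a matching relation for which the \GPM output remains interpretable. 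In short, you have correctly identified what has to be achieved, but the three ingredients that actually achieve it --- the $\ne$-matching encoding over $[y]\cup\{?\}$, the coprime block periods, and the resulting one-pair-per-shift alignment structure --- are missing.
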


\begin{proof}
We show a reduction from Boolean matrix multiplication. Consider a matrix $A$ of size $x \times y$ and a matrix $B$ of size $y \times z$, where $x = N^\alpha$, $y = N^{\beta}$, $z = N$. We transform the matrix $A$ by replacing every $1$ by the number of the column it belongs to and every $0$ by the don't care character $?$. Similarly, we replace each $1$ in $B$ by the number of the row it belongs to and every~$0$ by the don't care character $?$. 

\begin{example}
Consider $A = ((0,0,1),(1,0,1),(0,1,0))$ and $B = ((1,0,1),(0,1,0),(1,1,0))$. After the transform, they become $((?,?,3),(1,?,3),(?,2,?))$ and $((1,?,1),(?,2,?),(3,3,?))$, respectively. 
\end{example}

We define the text $T = ?^{z^2} A_1 ?^{z-y+1} A_2 ?^{z-y+1} \ldots \, ?^{z-y+1} A_x ?^{z^2}$,
where $A_i$ is the $i$-th row of~$A$, and the pattern $P = B_1 ?^{z-y} B_2 ?^{z-y} \ldots \, ?^{z-y} B_z$, where $B_j$ is the $j$-th column of the matrix~$B$. The length of $T$ is $n = 2z^2 + (x-1)(z-y+1) + xy = \Oh(N^{1+\alpha})$, and the length of $P$ is $m = yz+(z-y)(z-1) = \Oh(N^2)$. Next, we define the matching relationship as follows. Every character different than the don't care is defined to match all characters of the alphabet but itself, and the don't care character matches all characters of the alphabet. Consequently, the alphabet has size $y+1$ and the matching relationship matrix contains $\S = \Theta(y^2) = \Theta(N^{2\beta})$ set bits. 

Let $C = A \times B$. By definition, $C[i, j] = 1$ iff $\bigvee_{k=1}^{y} (A_i [k] \wedge B_j[k]) = 1$. We claim that this is the case iff, aligning $A_i$ in the text and $B_j$ in the pattern does not yield an occurrence of the pattern. Suppose first that $\bigvee_{k=1}^{y} (A_i[k] \wedge B_j[k]) = 1$. Then there is $k_0$ such that $A_i[k_0] = B_j[k_0] = 1$. In the text and in the pattern they are both encoded by the same $k_0 \neq \, ?$ and aligned, and $k_0$ does not match itself. Therefore, we do not have an occurrence. Assume otherwise. We need to show that for every character $a \neq \, ?$, $a$ is not aligned with itself. For $B_j$ it follows from the fact that $\bigvee_{k=1}^y (A_i[k] \wedge B_j[k]) \neq 1$. For other columns of $B$ it follows from the shift caused by the don't care characters. 

It follows that a combinatorial algorithm that correctly outputs all occurrences of $P$ in~$T$ in $\Oh(\S^{0.5-\eps} n)$ time implies a combinatorial algorithm for Boolean matrix multiplication of matrices of size $N^\alpha \times N^{\beta}$ and $N^{\beta} \times N$ in time $\Oh(\S^{0.5-\eps} n) =\Oh(N^{1+\alpha+2\beta(0.5 - \eps)}) = \Oh(N^{\alpha+1+\beta-2\eps\beta})$, which contradicts the combinatorial matrix multiplication conjecture.
The lower bound follows. 
\end{proof}

\begin{restatable}{corollary-restatable}{conditionallbdi}\label{cor:cond_LB_d_i}
For any $\alpha \geq 1$, and $1 \ge \beta, \eps > 0$, there is no combinatorial algorithm that solves \GPM in time $\Oh(\D^{1-\eps} n)$, for $n=\Theta(m^{(1+\alpha)/2})$ and $\D=\Theta(m^{\beta})$. For any $\alpha \geq 1$, and $1 \ge \eps > 0$, there is no combinatorial algorithm that solves \GPM in time $\Oh(\I^{0.5-\eps} n)$, for $n=\Theta(m^{(1+\alpha)/2})$ and $\I=\Theta(m)$. 
\end{restatable}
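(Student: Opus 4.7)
The plan is to derive both bounds directly from the reduction already used in Lemma~\ref{lm:cond_LB_s}, by re-examining the parameters $\D$ and $\I$ of the instance it produces and by rescaling the middle dimension of the underlying BMM. No fundamentally new construction is needed.

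The first step is a bookkeeping observation about the instance built in Lemma~\ref{lm:cond_LB_s}. Each non-don't-care character of $\Sigma_P$ matches every element of $\Sigma_T$ except one value, so it has matching degree exactly $y$, and the don't care character has degree $y+1$; hence $\D = \Theta(y)$. Keeping $\Sigma_T$ in its natural linear order, the set of characters of $\Sigma_T$ matching any fixed non-don't-care $P[j]$ is $\{1,\dots,y+1\}$ with a single value deleted, which is a union of at most two maximal intervals, while the don't care symbol in the pattern contributes a single interval per position. Summing over $j\in[m]$ gives $\I = \Theta(m)$.

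For the $\I$ bound, the reduction from Lemma~\ref{lm:cond_LB_s} with $x = N^{\alpha}$ and $y = z = N$ directly yields $m = \Theta(N^{2})$, $n = \Theta(N^{\alpha+1}) = \Theta(m^{(1+\alpha)/2})$, and $\I = \Theta(m)$. A combinatorial algorithm running in $\Oh(\I^{0.5-\eps} n) = \Oh(N^{\alpha+2-2\eps})$ would then multiply $N^{\alpha}\times N$ and $N \times N$ Boolean matrices below the $\Oh(N^{\alpha+2})$ barrier, contradicting the BMM conjecture.

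For the $\D$ bound I would keep $x = N^{\alpha}$ and $z = N$ but take $y = N^{2\beta}$, so that $\D = \Theta(y) = \Theta(m^{\beta})$ while $n = \Theta(m^{(1+\alpha)/2})$ is preserved. A combinatorial $\Oh(\D^{1-\eps} n)$ algorithm would then multiply $N^{\alpha} \times N^{2\beta}$ and $N^{2\beta} \times N$ Boolean matrices in $\Oh(N^{\alpha+2\beta+1-2\beta\eps})$, again contradicting BMM. The main obstacle I anticipate is that the construction of Lemma~\ref{lm:cond_LB_s} implicitly assumes $y \le z$, i.e.~$\beta \le 1/2$; for $\beta \in (1/2, 1]$ one has to adjust the padding scheme in the pattern (or, equivalently, enlarge $z$ while retaining $n = \Theta(m^{(1+\alpha)/2})$) so as to accommodate $y > z$. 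This step is where the notational work lies, but it does not alter the BMM arithmetic above, so the contradiction with the BMM conjecture carries over.
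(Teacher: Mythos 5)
Your proposal is correct and follows essentially the same route as the paper: both parts are read off the reduction of Lemma~\ref{lm:cond_LB_s} by computing $\D$ and $\I$ in the constructed instance, and the paper's proof of the first part---``in the constructed instance $\D = \Theta(m^{\beta/2})$,'' with $\beta$ denoting the exponent from Lemma~\ref{lm:cond_LB_s}---is precisely your explicit reparametrization $y = N^{2\beta}$. Your bookkeeping for $\I = \Theta(m)$ (at most two intervals per non-don't-care position, one per don't care) is also what the paper uses, just spelled out more explicitly.

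One remark on the obstacle you flag for $\beta > 1/2$: it is real, and it is not something the paper resolves either. In the reduction $\D \le |\Sigma_T| = y+1$ while $m = \Theta(z^2)$, so the construction intrinsically caps $\D$ at $\Oh(\sqrt{m})$; enlarging $z$ cannot help, since $m$ grows as $z^2$ and the cap persists. The paper's corollary statement allows $\beta$ up to $1$, but its proof only exhibits instances with $\D = \Theta(m^{\gamma})$ for $\gamma \le 1/2$. So what you have identified is a looseness in the stated parameter range of the corollary, not a gap in your own argument; your proof is as complete as the paper's.
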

\begin{proof}
To show the first part of the claim, note that in the constructed instance of generalized pattern matching $\D = \Theta (m^{\beta/2})$. For the second part, we take $\beta = 1$. Then $\I = \Oh(m)$, and therefore a combinatorial algorithm that correctly outputs all occurrences of $P$ in~$T$ in $\Oh(\I^{0.5-\eps} n)$ time implies a combinatorial algorithm for Boolean matrix multiplication of matrices of size $N^\alpha \times N$ and $N \times N$ in time $\Oh(\I^{0.5-\eps} n) =\Oh(N^{1+\alpha+2(0.5 - \eps)}) = \Oh(N^{\alpha+2-2\eps})$, which contradicts the combinatorial matrix multiplication conjecture.
\end{proof}

\bibliographystyle{plainurl}
\bibliography{main}

\end{document}